\documentclass[12pt]{article} 
\usepackage[margin=1in]{geometry}
\usepackage{times}  
\usepackage{helvet}  
\usepackage{courier}  
\usepackage[hyphens]{url}  
\usepackage{graphicx} 
\urlstyle{rm} 
\usepackage[square,numbers]{natbib} 
\usepackage{caption} 
\frenchspacing  
\setlength{\pdfpagewidth}{8.5in}  
\setlength{\pdfpageheight}{11in}  
%
\usepackage{algorithm}
\usepackage{algorithmic}

%
\usepackage{newfloat}
\usepackage{listings}
\DeclareCaptionStyle{ruled}{labelfont=normalfont,labelsep=colon,strut=off} 
\lstset{%
	basicstyle={\footnotesize\ttfamily},
	numbers=left,numberstyle=\footnotesize,xleftmargin=2em,
	aboveskip=0pt,belowskip=0pt,%
	showstringspaces=false,tabsize=2,breaklines=true}
\floatstyle{ruled}
\newfloat{listing}{tb}{lst}{}
\floatname{listing}{Listing}

\setcounter{secnumdepth}{2} 

%

\usepackage{amsmath,amsthm,amssymb,cleveref,enumitem,mathtools,pgfplots,tikz}
\usetikzlibrary{shapes.geometric}

\allowdisplaybreaks

\newtheorem{theorem}{Theorem}[section]
\newtheorem{lemma}[theorem]{Lemma}
\newtheorem{proposition}[theorem]{Proposition}
\newtheorem{corollary}[theorem]{Corollary}

\newtheorem{claim}{Claim}

\usepackage{eucal} 
\usepackage{subfigure}
\usepackage{xspace}

\DeclareMathOperator*{\argmax}{arg\,max}
\DeclareSymbolFont{extraup}{U}{zavm}{m}{n}
\DeclareMathSymbol{\vardiamond}{\mathalpha}{extraup}{87}

\usepackage{booktabs}
\usepackage{calrsfs,mathrsfs}

\usepackage{multirow}
\usepackage{pifont}

\usepackage{authblk}

\begin{document}

\title{Envy-free House Allocation under Uncertain Preferences}

\author[1]{Haris Aziz\thanks{haris.aziz@unsw.edu.au}}
\author[1]{Isaiah Iliffe\thanks{i.iliffe@student.unsw.edu.au}}
\author[2]{Bo Li\thanks{bo.li@polyu.edu.hk}}
\author[1]{Angus Ritossa\thanks{a.ritossa@student.unsw.edu.au}}
\author[2]{Ankang Sun\thanks{ankang.sun@polyu.edu.hk}}
\author[1]{Mashbat Suzuki\thanks{mashbat.suzuki@unsw.edu.au}}
\affil[1]{UNSW Sydney}
\affil[2]{Hong Kong Polytechnic University}
\date{\vspace*{-1cm}}

\maketitle

\begin{abstract}
Envy-freeness is one of the most important fairness concerns when allocating resources.   
We study the envy-free house allocation problem when agents have uncertain preferences over items and consider several well-studied preference uncertainty models. 
The central problem that we focus on is computing an allocation that has the highest probability of being envy-free. 
We show that each model leads to a distinct set of algorithmic and complexity results, including detailed results on (in-)approximability. 
En route, we consider two related problems of checking whether there exists an allocation that is possibly or necessarily envy-free.  
We give a complete picture of the computational complexity of these two problems for all the uncertainty models we consider. 
\end{abstract}

\section{Introduction}

Multi-agent resource allocation is one of the fundamental issues at the intersection of computer science and economics. 
We consider a fundamental allocation problem in which agents have preferences over items or houses and each agent is allocated one house while taking into account their preferences. The problem has been referred to as the \textit{house allocation} or \textit{assignment problem}. When agents have deterministic preferences over items, the problem is very well-understood. For example, there are characterizations of Pareto optimal allocations (see, e.g.~\citep{AbSo98a}) and polynomial-time algorithms for computing envy-free allocations (see, e.g. \citep{GSV19}).

{In this paper, we focus on fairly allocating items in the house allocation problem. }
The central fairness concept we consider is envy-freeness (EF) which is considered one of the gold-standard for capturing fairness. 
{An allocation is EF if every agent gets her favorite house among all the houses assigned to the agents. }
While the structure of EF allocations under deterministic preferences is well-understood, there is little prior work on the complexity of computing EF house allocations under uncertain preferences. Uncertain preferences are important to model when agents’ preferences may not be completely known, and the central planner may have to make decisions based on limited information. The uncertainty model can also capture the situation when the agents represent a group of agents with a composition of preferences. 
{Various types of uncertain preferences have been examined in the literature (see, e.g, \citep{HAK+12a,ABH+19,ABG+20a,ABF+22}), including the lottery Model, the joint probability model, and the compact indifference model.}
{All the models are well motivated and some real-world applications can be found in, for example, \citep{ABG+20a}.}


Given an instance in which agents' uncertain preferences are given as input, the central problem that we consider is  {\sc Max-ProbEF} which concerns computing an allocation with the highest probability of being EF. Such an allocation can be viewed as being \textit{robustly} fair under uncertain information. 
En route, we consider two related problems: {\sc ExistsPossiblyEF} (i.e., does there exist an allocation that is EF with non-zero probability?) and  {\sc ExistsCertainly\-EF} (i.e., does there exist an allocation that is EF with probability one?).
Our work aims to present a comprehensive study of the computation complexity for all these problems under various uncertainty models.

      \begin{table}[h]
          \centering
          \scalebox{0.75}{
          \begin{tabular}{llllll}
              \toprule
           &Lottery&Compact&Joint\\
              &&indifference&Prob\\
Problems&&&&&\\
               \midrule
     {\sc ExistsPossiblyEF}& NP-c &in P&in P\\
  \midrule
      {\sc ExistsCertainlyEF}&NP-c&in P&NP-c\\
      \midrule

       \multirow{2}{*}{\sc Max-ProbEF}&NP-h&NP-h&NP-h\\
&$\clubsuit$ &$\spadesuit$ $\dagger$& $\vardiamond$ \\ 
              \bottomrule
          \end{tabular}
          }

          \caption{Summary of the main results. The symbol $\clubsuit$ indicates that the problem admits no bounded multiplicative approximation, assuming P$\neq$NP.
The symbol $\spadesuit$ indicates that the problem admits no polynomially-bounded multiplicative approximation, assuming P$\neq$NP.
The symbol $\vardiamond$ indicates that there is no polynomial-time algorithm with better than $\frac{6}{5}$-approximation ratio, assuming P$\neq$NP.
The symbol $\dagger$ indicates the problem admits a polynomial-time algorithm that either solves \textsc{Max-ProbEF} exactly or certifies that every allocation has a small probability of being EF.  
} 
          \label{table:summary:uncertainEF}
      \end{table}

\subsection*{Contributions}

In this work, we consider three natural and well-motivated uncertain preference models, namely, the lottery model, the compact indifference model, and the joint probability model.
In the lottery model, each agent has an independent probability distribution over linear preferences. 
In the joint probability model, the input is a probability distribution over preference profiles.
The compact indifference model is a special type of lottery model in which the input is each agent having a corresponding weak order of the items and each complete linear order extension of this weak order is assumed to be equally likely.   
For each model, we first present the computational complexity results of checking whether there exists an allocation that is possibly envy-free ({\sc ExistsPossiblyEF}) or necessarily envy-free ({\sc ExistsCertainly\-EF}). 
Then we undertake a detailed complexity analysis of the central problem  {\sc Max-ProbEF} of computing an allocation that has the highest probability of being envy-free.
Our results are summarized in Table~\ref{table:summary:uncertainEF}. 


\paragraph{Lottery Model}
We start with the lottery model. We prove that both {\sc ExistsPossiblyEF} and {\sc ExistsCertainly\-EF} are NP-complete. 
The intractability of {\sc ExistsPossiblyEF} also implies that there is no polynomial-time algorithm with bounded multiplicative approximation ratio for {\sc Max-ProbEF}, assuming P$\neq$NP. 

\paragraph{Compact Indifference Model}
 In sharp contrast to the lottery model, we show that there exist polynomial-time algorithms for {\sc ExistsPossiblyEF} and {\sc ExistsCertainly\-EF}. 
However, the main problem {\sc Max-ProbEF}  continues to be intractable: we actually prove that finding an $f(n, m)$-approximation of {\sc Max-ProbEF} under the compact indifference model is NP-hard where $f(n, m)$ is an arbitrary polynomial in the numbers of agents $n$ and houses $m$. 
We complement this result by presenting a central algorithmic result: for any fixed $\epsilon >0$, there exists a polynomial-time algorithm that either computes the optimal solution for {\sc Max-ProbEF} exactly, or returns a certificate that every allocation has EF probability less than $\epsilon$.  
We regard this part as the main contribution of this work.


\paragraph{Joint Probability Model}
Finally, we study the joint probability model. 
We first show that {\sc ExistsPossiblyEF} can be solved in polynomial time, while {\sc ExistsCertainly\-EF} is NP-complete. 
Then we prove that there is no polynomial-time algorithm with a $(\frac{6}{5} - \epsilon)$-approximation ratio for {\sc Max-ProbEF} for any constant $\epsilon>0$,  assuming P$\neq$NP.

\section{Related Work}

Our work combines aspects of envy-free allocation and uncertainty in preferences that have been explored in the context of stable marriage, voting, and Pareto optimal allocation.


The house allocation problem is one of the most fundamental and well-studied problems in economics and computer science ~\citep{AbSo99a,ACMM05a,ABL+16a,BoMo01a,Gard73b,Sven94a,Sven99a}.  Typically, the problem has an equal number of agents and houses and the goal is to allocate one house to each agent. In our setup, we allow the number of houses to be more than the number of agents.
\citet{GSV19} presented an elegant polynomial-time algorithm to check whether an envy-free allocation exists or not in which each agent gets one item. 
\citet{AS22} also presented similar arguments for envy-free outcomes for bipartite graphs. 


Uncertainty in preferences has already been studied in voting~\citep{HAK+12a}.  They examine the computation of the probability of a particular candidate winning an election under uncertain preferences for various voting rules such as Plurality, $k$-approval, Borda, Copeland, and Bucklin etc.
\citet{ABH+19} explore the Pareto optimal allocation under uncertain preferences. We consider the same problem setup and preference uncertainty
models as them but instead of focusing on Pareto optimality, our central property is envy-freeness.
\citet{ABG+20a,ABF+22} examined uncertain preferences in the context of two-sided matching. The central property they examine is stability and they compute matchings that have the highest probability of being stable. 

%
%

{In a related line of work, \citet{DGKPS14} initiated the study of existence of envy-free allocations when agent valuations for items are drawn from a probability distribution. 
In follow up work, \citet{MS19,MS21} further refined the parameter ranges for which envy-free allocations exist with high probability. There has been several other  works \citep{FGHLPSSY19,BFGP22,BG22} on studying fair division under distributional models. In these models, where agent values are drawn from a distribution, allocations can change after the realization of the coin toss, whereas in our setting, we study a fixed allocation that is \textit{{``robust"}} under uncertainty.

\section{Preliminaries}

An instance of the \emph{(deterministic) house allocation  problem} is a triple $(N,H,\succ)$ where $N=\{1,\ldots, n\}$ is the set of $n$ agents, $H=\{h_1,\ldots, h_m\}$ is the set of $m$ items (also referred to as houses), and the preference profile ${\succ}=(\succ_1,\ldots,\succ_n)$ specifies complete and asymmetric
    	preferences $\succ_i$ of each agent $i$ over $H$. 
    	Note that in the classical allocation problem, agents' preferences are also assumed to be transitive, hence resulting in linearly ordered preferences.
In some examples, we will represent the preferences as an ordered list in decreasing order of preferences from left to right.
Let $\mathcal{R}(H)$ denote the set of all complete and asymmetric relations over the set of items $H$. 
	
    	An \emph{allocation} is an assignment of items to agents such that each agent is allocated a unique item, and each item is allocated to at most one agent. 
Throughout the paper we assume $m\geq n$ as the only envy-free allocation in the case of $m<n$ is one in which no agent gets any item.
For a given allocation $\omega$, let $\omega(i)$ denote the item allocated to agent $i$.
We denote the set of all allocations by $\mathcal{A}$.
	An allocation $\omega$ is \emph{envy-free (EF)} if $\omega(i)\succ_i  \omega(j)$ for $i\in N$ and $j\in N\setminus \{i\}.$

	In this work, we allow agents to express uncertainty in their preferences
    	and consider various uncertainty models. 

 \begin{itemize}
 \item \textbf{Lottery Model}: For each agent $i\in N$, we are given a probability distribution $\Delta_i(\mathcal{R}(H))$ over linear preferences. We assume that the probability distributions are independent. 
 \item \textbf{Compact Indifference Model}: Each agent reports a single weak preference list that allows for ties. Each complete linear order extension of this weak order is assumed to be equally likely. We use $a\succsim_i b$ to represent that $a$ is weakly preferred by $i$ over $b$. We use $a\sim_i b$ to represent that agent $i$ is indifferent between $a$ and $b$ in the weak preference order. 
%
\footnote{The assignment problem is also known as the House Allocation problem. The compact indifference model can be viewed as the assignment problem with ties, or as it is widely known in the literature as the House Allocation problem with Ties (HRT)~\citep{Manl13a},
where any preference list obtained by breaking ties arbitrarily is possible, and all possible preferences have the same likelihood of being realized.}

 \item  \textbf{Joint Probability Model}: A probability distribution $\Delta(\mathcal{R}(H)^n)$  over linear preference profiles is specified where a preference profile specifies (deterministic) preferences of each agent over items.

 \end{itemize}

An uncertain preference model is \emph{independent} if any uncertain preference profile $L$ under the model can be written as a product of uncertain preferences $L_i$ for all agents $i$, where all $L_i$'s are independent. Note that the joint probability model is not independent in general, but all the other problems that we study are independent.

For the uncertainty models, we consider the following corresponding problems:

\begin{itemize}
    \item {\sc Max-ProbEF}: Given an instance of the problem, compute an allocation that maximizes the probability of being envy-free (EF). Formally,
$$\argmax\limits_{w\in \mathcal{A}} \Pr\limits_{D\sim \Delta(\mathcal{R}(H)^n)}[w \text{ is EF under profile }D ].$$
\item {\sc ExistsCertainly\-EF}: Determine whether there exists an allocation that is EF with probability one.
\item {\sc ExistsPossibly\-EF}: Determine whether there exists an allocation that is EF with non-zero probability.
\end{itemize}
Note the answer to {\sc Max-ProbEF}  also gives an answer to {\sc ExistsCertainly\-EF} and {\sc ExistsPossibly\-EF}.

\section{Initial Structural \& Algorithmic Results}

In this section, we present some initial structural and algorithmic results. First we show that given an allocation, the probability that it is EF can be computed efficiently. 

\begin{proposition}\label{prop:efprob}
Given allocation $w$, the probability that $w$ is EF can be computed in polynomial time for the (i) joint probability model, (ii) lottery model, and (iii) compact indifference model.
\end{proposition}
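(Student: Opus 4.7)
The plan is to handle the three models separately, in each case reducing the EF-probability computation to either (a) an explicit enumeration over a support that is small by assumption, or (b) a closed-form product formula made possible by independence across agents.

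For the joint probability model, I would simply iterate through the support of $\Delta(\mathcal{R}(H)^n)$, which is given explicitly in the input as a list of preference profiles with attached probabilities. For each profile $D$ in the support, checking whether $w$ is EF under $D$ takes $O(n^2)$ time by comparing $w(i)$ with each $w(j)$ in $\succ_i$. I then sum the probabilities of the profiles for which $w$ is EF. Since the support size is bounded by the input size, the total running time is polynomial.

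For the lottery model, the key observation is that the agents' random preferences are independent, and $w$ is EF if and only if each agent $i$ simultaneously prefers $w(i)$ to every $w(j)$ with $j\neq i$. These are independent events across $i$, so
\[
\Pr[w \text{ is EF}]=\prod_{i\in N}\Pr_{\succ_i\sim\Delta_i}\bigl[w(i)\succ_i w(j)\text{ for all }j\neq i\bigr].
\]
Each factor is computed by enumerating the (explicitly given) support of $\Delta_i$ and summing the probabilities of those orders that rank $w(i)$ above every $w(j)$. This is polynomial in the input size.

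For the compact indifference model, independence across agents again reduces the task to computing, for each $i$, the probability that a uniformly random linear extension of $i$'s weak order ranks $w(i)$ strictly above every other $w(j)$. I would first check the deterministic cases: if some $w(j)\succ_i w(i)$ strictly in the weak order, the factor is $0$; if $w(i)\succ_i w(j)$ strictly for every $j\neq i$, the factor is $1$. Otherwise, let $T$ be the tie class of $w(i)$ and let $c_i=|\{j\neq i: w(j)\sim_i w(i)\}|$. By symmetry, among any $c_i+1$ distinct items of $T$, each is equally likely to appear first in a uniformly random linear extension, so the factor is $\tfrac{1}{c_i+1}$. Multiplying these $n$ factors gives $\Pr[w\text{ is EF}]$ in polynomial time. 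I expect this last case to be the main (and only) obstacle, since it is the only place where a combinatorial argument about random linear extensions is needed; all other cases are bookkeeping over an explicit distribution.
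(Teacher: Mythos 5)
Your proposal is correct and follows essentially the same route as the paper: enumerate the explicit support for the joint probability model, use independence to factor the EF probability into per-agent non-envy probabilities for the lottery model, and for the compact indifference model use the symmetry of uniformly random tie-breaking to get a factor of $\frac{1}{c_i+1}$ per agent (the paper writes this as $\frac{1}{|S|}$ with $S$ including agent $i$'s own house). Your explicit handling of the zero-probability case where some $w(j)\succ_i w(i)$ strictly is a point the paper states in the main text but leaves implicit in the appendix proof.
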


The argument for the joint probability model is trivial. For the other models, for each agent $i\in N$, we find the probability $q_i$ that the agent is not envious. The probability that $w$ is envy-free is equal to the probability that all agents are envy-free which is computed as $\prod_{i\in N}q_i$. The details are in the appendix.

Next, we present some structural results that suggest that the main challenge of $\operatorname{Max-ProbEF}$ lies in determining which houses are included in the matching. 
We say that an uncertain preference model is \textit{reasonable} if for any  set $M'\subset M$ with $|M'|=n$ of houses, and any $i\in N$ and $j\in M'$, 
the probability $p_{ij}$ that $j$ is the most preferred house for agent $i$ among houses in $M'$ can be computed in polynomial time.  Note that all the models we consider are reasonable. For example, we argue why the lottery model is reasonable. Let $\Delta_i=(\lambda_{r},\succ^r_i)_{r\in S }$ be the distribution where agent $i$'s preference is drawn i.e., preference $\succ^r_i$ is drawn with probability $\lambda_{r}$ from the set of preferences $\{\succ^r_i\}_{r\in S}$ with positive support. Note that $p_{ij}$ can be computed efficiently since
$p_{ij}=\sum\limits_{\{r\in S \ | j \succ^r_i \ell, \forall \ell\in M'\setminus j \}}\lambda_{r}.$  



%

\begin{proposition}\label{prop: Restrict}
For any reasonable uncertainty model that is independent, given a set $M'\subset M$ with $|M'|=n$ of houses, an allocation of $M'$ to $N$ which maximizes the probability of EF can be computed in polynomial time. 
\end{proposition}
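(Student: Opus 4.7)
My plan is to reduce the problem to a single instance of maximum-weight bipartite matching. Because $|M'|=n$, any allocation of $M'$ to $N$ is a bijection $w: N \to M'$, and the multiset of houses held by the agents under $w$ is exactly $M'$. Therefore agent $i$ is non-envious under $w$ if and only if $w(i)$ is agent $i$'s most preferred house among those in $M'$, which happens with probability $p_{i,w(i)}$ by definition. Using the independence of the preference model, I would then obtain the factorization
\[
\Pr[w \text{ is EF}] \;=\; \prod_{i \in N} p_{i, w(i)},
\]
where by reasonableness each $p_{ij}$ is computable in polynomial time.

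Next, I would maximize $\prod_{i \in N} p_{i,w(i)}$ over bijections $w: N \to M'$ by forming the complete bipartite graph on $N$ and $M'$, assigning each edge $(i,j)$ the weight $c_{ij} = \log p_{ij}$ (with the convention $c_{ij} = -\infty$ when $p_{ij}=0$), and computing a maximum-weight perfect matching via the Hungarian algorithm. Since $\log$ is strictly monotone, a matching maximizing $\sum_i c_{i, w(i)}$ is precisely a bijection maximizing $\prod_i p_{i, w(i)}$, yielding an optimal allocation in polynomial time.

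The only genuine subtlety is the treatment of zero-probability edges. I would address it with a preliminary unweighted bipartite matching feasibility check on the subgraph of positive-probability edges: if this subgraph admits no perfect matching, then every allocation of $M'$ to $N$ has EF probability zero and any bijection is optimal; otherwise, running the Hungarian algorithm on this subgraph (where all weights $\log p_{ij}$ are finite) returns an optimal solution. This zero-probability case is the only real obstacle, and it is dispatched by a standard polynomial-time bipartite matching computation.
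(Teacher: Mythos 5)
Your proof is correct and follows essentially the same route as the paper: both reduce the problem to a maximum-weight bipartite matching on $N \cup M'$ with edge weights $\log p_{ij}$, using independence to factor the EF probability as $\prod_{i} p_{i,w(i)}$. Your explicit handling of zero-probability edges via a preliminary feasibility check is a small point of extra care that the paper's proof leaves implicit, but it does not change the argument.
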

\begin{proof}
Construct a complete weighted bipartite graph $G=(N \cup M',E)$ with edge weights $w_{ij}=\log(p_{ij})$ for each $i\in N$, $j\in M'$. Here $p_{ij}$ denotes the probability that the house $j$ is agent $i$'s favourite when the houses are restricted to $M'$. 
Observe now that the maximum weight matching in $G$ gives an allocation that maximizes the probability of EF when the houses are restricted to $M'$. This holds since $p_{ij}$ is the probability that $i$ does not envy any other agent when assigned $j$, and thus the probability that an allocation $w$ is envy free is $\prod_{i\in N} p_{i,w(i)}$. Furthermore, the allocation that maximizes  $\prod_{i\in N} p_{i,w(i)}$ also maximizes $\sum_{i\in N}\log(p_{i,w(i)})$. 
\end{proof}

Proposition~\ref{prop: Restrict} implies that in order to find the optimal solution to $\operatorname{Max-ProbEF}$ it suffices to find the houses that are assigned in the optimal solution rather than the allocation itself. Hence we may focus our attention on finding a set of houses $M^*$ with $|M^*|=n$ such that $G=(N \cup M^*,E)$ has the highest max weight matching.
Our insights give the following algorithmic result. 



\begin{proposition}\label{prop: EF n+k}
  For any independent reasonable uncertainty model, $\operatorname{Max-ProbEF}$ problem with $m=n+k$ houses can be solved in polynomial time, for any constant $k\in \mathbb{N}$.  
\end{proposition}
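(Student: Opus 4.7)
The plan is to leverage Proposition~\ref{prop: Restrict}, which reduces $\operatorname{Max-ProbEF}$ to the sub-problem of choosing which $n$ houses to include in the matching: once the subset $M'\subset M$ of size $n$ is fixed, the best allocation of $M'$ to $N$ can be computed in polynomial time via a max-weight bipartite matching with edge weights $\log(p_{ij})$.

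The key observation is that when $m = n+k$, choosing $n$ houses to include is equivalent to choosing $k$ houses to exclude, and the number of such choices is $\binom{n+k}{k} = O((n+k)^k)$, which is polynomial in the input size whenever $k$ is a fixed constant. This gives a simple enumeration algorithm: iterate over every subset $M' \subset M$ with $|M'| = n$; for each such $M'$, invoke the polynomial-time subroutine from Proposition~\ref{prop: Restrict} to compute the allocation of $M'$ to $N$ maximizing the EF probability; and return the best allocation found over all iterations.

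For correctness, I would note that any allocation in $\mathcal{A}$ uses exactly $n$ houses (since $m\geq n$ and each agent receives a distinct house), so the optimum of $\operatorname{Max-ProbEF}$ is attained by some allocation whose image is one of the enumerated subsets $M'$. By Proposition~\ref{prop: Restrict}, the subroutine finds the best allocation onto the correct $M'$, so the global optimum is recovered. The running time is $\binom{n+k}{k} \cdot \mathrm{poly}(n,m)$, which is polynomial for constant $k$.

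There is no real obstacle here; the only minor subtlety is confirming that the ``reasonable'' assumption, which allows $p_{ij}$ to be computed relative to any $n$-sized restriction $M'$, indeed applies to each of the $\binom{n+k}{k}$ candidate subsets, so that Proposition~\ref{prop: Restrict} can be invoked as a black box in every iteration. This follows directly from the definition of ``reasonable'' as stated in the paper, so the argument goes through.
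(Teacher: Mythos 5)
Your proposal is correct and follows essentially the same route as the paper's proof: enumerate all $\binom{m}{n}=\binom{n+k}{k}=O(n^k)$ house subsets, invoke Proposition~\ref{prop: Restrict} on each, and return the best allocation found. The correctness and running-time arguments match the paper's, so there is nothing to add.
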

\begin{proof}
We iterate through each of the ${m \choose n}$ house restrictions and run the polynomial-time algorithm proposed in Proposition~\ref{prop: Restrict}. Return the allocation with the highest max weight matching among the ${m \choose n}$ bipartite graphs. This procedure is polynomial-time since the algorithm in Proposition~\ref{prop: Restrict} is called at most $ {m \choose n} = {n+k \choose n}= {n+k \choose k}=O(n^k)$ times. 
Note that aforementioned procedure outputs an optimal solution since it outputs an  allocation that maximizes the probability of EF under  each possible house restriction.
\end{proof}

%



 \section{Lottery Model}\label{sec: Lottery}

In this section, we study the lottery model.

 \begin{theorem}\label{theorem:Lottery-CertainlyEF}
 	For the lottery model, {\sc{ExistsCertainlyEF}}
 	is NP-complete.
 \end{theorem}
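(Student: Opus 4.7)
Membership in NP is immediate: given an allocation $\omega$ as a certificate, we loop over each agent $i$ and each preference $\succ_i^r$ in the (polynomial-sized) support of $\Delta_i$ and directly verify that $\omega(i)\succ_i^r \omega(j)$ for every $j\neq i$.

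For NP-hardness, the plan is to reduce from a standard combinatorial NP-complete problem, with 3-SAT being the most natural candidate. The crucial leverage afforded by the lottery model is that each agent $i$'s distribution $\Delta_i$ may place positive probability on several different linear orders, and certain envy-freeness forces $\omega(i)$ to dominate every $\omega(j)$ in \emph{each} of those orders simultaneously. Equivalently, for each potential choice $\omega(i)=h$, we may forbid an essentially arbitrary subset of houses from appearing elsewhere in $\omega$, simply by listing enough preferences in $\Delta_i$ to rule those houses out. This ``for each choice, here is a forbidden pattern'' control is exactly what is needed to encode clause satisfaction, and it is precisely the feature that the compact indifference model lacks (hence the tractability of that model shown later in the paper).

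Concretely, for a 3-SAT instance with variables $x_1,\ldots,x_n$ and clauses $C_1,\ldots,C_m$, I would introduce a variable gadget containing an agent $a_i$ together with two dedicated houses $h_i^T$ and $h_i^F$, whose distribution is chosen so that in any certainly-EF allocation $a_i$ is forced to receive exactly one of $h_i^T$ or $h_i^F$, thereby encoding a truth assignment. For each clause $C_j$, I would add a clause agent $b_j$ equipped with a private ``reward'' house that only $b_j$ can envy-freely consume, and design $\Delta_{b_j}$ so that its support contains, for each all-false assignment to the three literals of $C_j$, a linear order that witnesses envy of $b_j$ towards one of the corresponding variable houses. Additional dummy agents and houses absorb leftover capacity and block cross-gadget swaps. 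Then a certainly-EF allocation exists if and only if the 3-SAT instance is satisfiable.

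The main obstacle will be engineering the clause gadget and the padding: the clause agent must be genuinely forced into its reward house (rather than stealing a variable house), the ``at least one literal is true'' condition must be precisely what avoids envy across every realisation of $\Delta_{b_j}$, and the dummy agents and houses must not interact pathologically with the main gadgets. Once the gadgets are in place, correctness is routine: a satisfying assignment directly yields a certainly-EF allocation, and conversely, reading off where each variable agent is assigned in any certainly-EF allocation recovers a satisfying assignment.
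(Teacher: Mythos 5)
Your NP-membership argument is fine, and you have correctly identified the feature of the lottery model that drives the hardness (several preference lists per agent let you forbid an essentially arbitrary set of houses from being co-allocated with a given choice of $\omega(i)$). However, the reduction itself is only a plan, and its central gadget as described does not work. The problem is a conjunction/disjunction mismatch: an agent $b_j$ holding a \emph{single} fixed reward house fails to be certainly EF as soon as \emph{any one} house that beats the reward house in \emph{some} list of $\Delta_{b_j}$ is allocated. So the condition ``$b_j$ is certainly EF'' is of the form ``every house in some fixed dangerous set $U$ is unallocated.'' With $U$ equal to the three false-literal houses, this encodes ``all three literals are true,'' not ``at least one literal is true.'' Your sketch (``a linear order that witnesses envy of $b_j$ towards one of the corresponding variable houses'') makes this worse: if that order places $h_{i_1}^F$ above the reward house, then $b_j$ is envious whenever literal $1$ is false, regardless of the other two literals. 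To express the disjunction you would need the clause agent to have several admissible assignments (e.g.\ three reward houses, where receiving the $k$-th one forbids only the $k$-th false-literal house), plus machinery forcing $b_j$ into one of them and keeping other agents out; none of this is constructed, and the variable gadget (``forced to receive exactly one of $h_i^T$ or $h_i^F$'') is likewise asserted rather than built --- forcing an agent \emph{into} a set of houses under certain envy-freeness requires auxiliary agents that guarantee those houses are allocated and cannot be stolen.

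For comparison, the paper sidesteps the disjunction issue entirely by reducing from Restricted 3-Exact Cover rather than 3-SAT: there each element lies in exactly three subsets, each element-agent gets six carefully interleaved preference lists (two per containing subset), and two structural claims do the work --- any certainly EF allocation must give agent $i$ a house from one of its three subsets, and if any house of a subset's triple is allocated then all three are, so the allocated houses partition into whole triples and read off an exact cover. The constraint being encoded there is naturally of the ``exactly one / all-or-nothing'' form that single-house envy can express, which is why that source problem is a better fit than 3-SAT. Your approach could likely be repaired along the multi-reward-house lines above, but as written the key construction is absent and the one gadget you do outline encodes the wrong Boolean connective.
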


\begin{proof}[Proof of Theorem~\ref{theorem:Lottery-CertainlyEF}]
	The problem {\sc{ExistsCertainlyEF}}
	is in NP because it can be checked in polynomial time whether a given allocation is certainly EF or not.
	
	To prove NP-hardness, we reduce from the problem of Restricted 3-Exact Cover:
	given a family $F = \{ S_1,\ldots, S_n\}$
	of $n$ subsets of $S = \{ u_1, \ldots, u _ { 3 m }\}$,
	each subset of $F$ has a cardinality three, and moreover each element in $S$ appears in exactly three subsets of $F$.
	Is there a subfamily of $ m $ subsets that covers $S$?
	
	We now construct a house allocation instance under lottery model with
	a set $\{1,\ldots, 3m\}$ of $3m$ agents and a set $H = \{ h ^ 1 _ 1, h^2_1, h^3_1, h_2^1, h_2^2, h_2^3, \ldots, h_n^1, h_n^2, h_n^3 \}$ of $ 3n $ houses.
	Intuitively, every agent $ i $ corresponds to the element $ u _ i $ of Restricted 3-Exact Cover problem.
	For every subset $S_j $,
	we construct three houses $ h^1_j, h^2_j, h^3_j$.
	Then, we construct the preferences lists of agents.
	Fix agent $ i $ and let $ S _ {i_1}, S _ { i _ 2}, S_{i_3}$
	be the three subsets including element $ u _ i $ in the Restricted 3-Exact Cover instance.
	Then, for each $ S _ { i _ j }$,
	we construct two preference lists for agent $ i $, and hence in total six preference lists for each agent.
	In every linear preference, agent $ i $ prefers the nine houses corresponding to the
	three subsets containing element $ u _ i $
	than other houses of which the order can be arbitrary.
	For convenience, we only provide partial preference lists.

	For $j=1,2,3$, suppose that element $ u _ i $ is the one with $l(j)$-th smallest index in $S_{ i _ j }$, and moreover, let $ \{ p(j) , q(j) \} = \{ 1,2,3\} \setminus \{ l(j) \}$ be the set of other two indices.
	Agent $i$'s two preferences corresponding to $ S _ { i _ 1 }$ are
	
	\begin{enumerate}
		\item $h _{ i _ 1 } ^ {l(1)} \succ h _{ i _ 1 } ^ {p(1)}\succ h _{ i _ 1 } ^ {q(1)} \succ h _ { i _ 2 } ^ {l(2)}\succ h _ { i _ 2}^ {p(2)} \succ h _ { i _ 2}^ {q(2)} \succ h _ { i _ {3} } ^ {l(3)}\succ h _ { i _ {3}}^ {p(3)} \succ h _ { i _ {3}}^ {q(3)} \succ \textnormal{other houses}$;
		\item $h _{ i _ 1 } ^ {l(1)} \succ h _{ i _ 1 } ^ {q(1)}\succ h _{ i _ 1 } ^ {p(1)} \succ h _ { i _ 2 } ^ {l(2)}\succ h _ { i _ 2}^ {p(2)} \succ h _ { i _ 2}^ {q(2)} \succ h _ { i _ {3} } ^ {l(3)}\succ h _ { i _ {3}}^ {p(3)} \succ h _ { i _ {3}}^ {q(3)} \succ \textnormal{other houses}$;
	\end{enumerate}
	In particular, agent $ i $ prefers the three houses regarding $S _ { i_ 1 }$
	than other houses. 
	Then, among the three houses regarding $ S _ {i _ d }$ ($ d\neq 1 $),
	she likes the one corresponding to her the most.
	For the above two preferences, the only difference is that the second and the third most preferred houses are swapped.
	Similarly, for the two preferences regarding $S _ {i_2}$, 
	agent $ i $ prefers $\{h_{i_2}^j\}$ than other houses,
	and 
	of these she likes $ h _ {i_2}^{l(2)}$ the most.
	For houses regarding $S _ {i_d}$ with $d\neq 2$,
	she always likes the one corresponding to her the most.
	The same idea applies to the two preferences lists regarding $S _ {i_3}$.

	We now prove that we have a ``yes'' Restricted 3-Exact Cover instance if and only if
	there is a certainly EF allocation.
	Without loss of generality,
	denote by $ S_ 1, S_2, \ldots, S_m$ a solution of Restricted 3-Exact Cover instance.
	For each agent $ i $,
	if element $ u _ i $ is included in $ S_  j $
	and $ u _ i $ is the one with $l$-th smallest index,
	then assign house $ h _ j ^ l $ to agent $ i $.
	Since $ S_1,S_2, \ldots, S_m $ is a solution,
	each element appears exactly once in $\bigcup_{ t \in [m]} S _ t $,
	which implies that each agent $ i $ receives one house.
	Let $\omega$ be such an allocation.
	For a contradiction, suppose agent $ i $ is not certainly EF in $\omega$,
	then there must be a preference $\succ$ of agent $i $ together with
	an allocated house
	$ h _ a^b$ such that $ h _ a^b \succ \omega(i)$.
	By agent $ i $'s preferences,
	it holds that $ \omega(i) \notin \{ h_ a ^ 1, h_a^2, h_a^3 \}$ but $ u _ i \in S_ a$.
	Note that
	$ \omega ( i ) \in \{ h _ c ^ 1, h_c^2, h_c^3 \}$ for some $c \neq a$,
	which implies that
	in the ``yes'' solution, there are two different subsets $S_c, S_a$ containing $ u _ i $, a contradiction.
	
	For the other direction,
	suppose that $\omega^*$ is a certainly EF allocation.
	Fix $ i $, recall that element $ u _ i $ is in subsets
	$ S _{ i _ 1},S _ { i  _2}, S _{  i _ 3}$.
	Then, we show that $\omega^*(i)$ must be a house constructed from
	$ S _{ i _ 1}, S _ {i_2}, S _ { i _ 3 } $.
	
	\begin{claim}\label{Claim::certainly-lottery-1}
		$\omega ^ * ( i ) \in \bigcup _ { j =1} ^ t \bigcup _ { k = 1} ^ 3 h _ {i _ j} ^ k .$
	\end{claim}
	\begin{proof}[Proof of Claim~\ref{Claim::certainly-lottery-1}]
		For a contradiction, assume $ \omega^*(i) = h _ p ^ k$ for some $ p \neq i_1,i_2, i_3$.
		By preferences lists,
		there exists another agent $ s $,
		of whom $h_p^k$ is the most prefered house in one preference of him.
		Thus, agent $s$ is not certainly EF, a contradiction.
	\end{proof}
	Next, we show that
	if for some $ k \in \{1,2,3\}$, house $ h _ j ^ k $ is assigned,
	then other two houses constructed from subset $ S_  j $ should also be assigned in $\omega^*$.
	
	\begin{claim}\label{Claim::certainly-lottery-2}
		In allocation $\omega ^*$,
		if for some $ k \in \{ 1,2,3\}$, house $ h _ j ^ k $ is assigned,
		then houses $ h_ j ^1, h_j^2, h_j^3 $ are allocated.
	\end{claim}
	\begin{proof}[Proof of Claim~\ref{Claim::certainly-lottery-2}]
		Without loss of generality, assume $ h _ j ^ 1$ is allocated.
		According to constructed preferences,
		there exists an agent $ p_2 $ (resp., $ p _ 3$) of whom
		house $h _ j ^ 2$ (resp., $h _ j ^ 3$) is the most preferred house in one preference. 
		Then, for agent $ p _ 2$,
		she has a preference in which $ h _ j ^1$ is the second most preferred house, while in the same preferece,
		$ h_ j ^2$ is the house she likes the most.
		Then, house $ h _ j ^2$ must be allocated in $\omega^*$, and moreover,
		can only be assigned to agent $ p _ 2$.
		By similar arguments,
		house $ h _ j ^ 3$ should also be assigned in $\omega^*$.
	\end{proof}
	Note that in total $3m$ houses are allocated in $\omega^*$.
	According to Claim~\ref{Claim::certainly-lottery-2},
	the $3m$ allocated houses must corresponds to $m$ subsets of Restricted 3-Exact Cover instance.
	In other word,
	the $3m$ allocated houses must be
	in the form of
	$ h^1_{l_1}, h^2_{l_1}, h^3_{l_1},
	\ldots, h^1_{l_m},h^2_{l_m}, h^3_{l_m}$.
	Therefore, allocation $\omega^*$ offers us $m$ subsets $S_{l_1}, S_{l_2}, \ldots, S_{l_m}$ for Restricted 3-Exact Cover instance.
	If these $m$ subsets are not a solution,
	then at least one element $ u _ { i ^ {\prime}}$
	is not coverd.
	Then, $\omega^*(i^{\prime})$
	is a house regarding a subset $ S _ { j ^{\prime}} $ with $ u  _{ i ^ {\prime}} \notin S _ { j ^{\prime}} $,
	contradicting Claim~\ref{Claim::certainly-lottery-1}.
	Thus,
	the $m$ subsets $S_{l_1}, S_{l_2}, \ldots, S_{l_m}$ drawn from $\omega^*$ provides a solution of
	Restricted 3-Exact Cover.
\end{proof}

 \begin{theorem}\label{theorem:Lottery-PossiblyEF}
 	For the lottery model, {\sc{ExistsPossiblyEF}}
 	is NP-complete.
 \end{theorem}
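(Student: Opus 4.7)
The plan is to first establish membership in NP and then reduce from a known NP-hard combinatorial problem. For NP membership, a polynomial-sized certificate is an allocation $\omega$ together with, for each agent $i$, a single preference $\succ_i$ in the support of $\Delta_i$; we verify in polynomial time that $\omega$ is EF under the profile $(\succ_1,\ldots,\succ_n)$, and by independence this immediately yields $\Pr[\omega\text{ is EF}] \ge \prod_i \Pr_{\Delta_i}[\succ_i] > 0$.

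The key structural observation to exploit is that, by independence of the per-agent lotteries, an allocation $\omega$ is possibly EF if and only if for every agent $i$ there exists some preference $\succ_i \in \mathrm{supp}(\Delta_i)$ under which $\omega(i)$ is ranked above every other allocated house $\omega(j)$ with $j \neq i$. Combined with the matching viewpoint of Proposition~\ref{prop: Restrict}, the combinatorial core of the problem becomes: choose an $n$-subset $M' \subseteq H$ and a bijection $\omega : N \to M'$ such that for each agent $i$, some preference in $i$'s support ranks $\omega(i)$ at the top of $M'$.

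For NP-hardness I would reduce from Restricted Exact Cover by 3-Sets, using the same element-to-agent and set-to-house-triple scaffolding as in Theorem~\ref{theorem:Lottery-CertainlyEF}. Create one agent per element of the universe and three houses $h_j^1, h_j^2, h_j^3$ per 3-set $S_j$; for each occurrence of element $u_i$ in some $S_j$, place into agent $i$'s support one preference that ranks the three houses of $S_j$ at the very top, with the house corresponding to $u_i$'s ``slot'' in $S_j$ on top, followed by the other two houses of $S_j$, followed by the remaining houses in an arbitrary order. A valid exact cover $\{S_{l_1},\ldots,S_{l_m}\}$ then yields a possibly EF allocation by assigning each element-agent to her slot in her covering set, and selecting the corresponding preference for each agent.

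The main obstacle will be the converse direction: showing that any possibly EF allocation must allocate houses in complete triples corresponding to a valid exact cover. Since possibly EF is strictly weaker than certainly EF, the forcing is subtler --- it is no longer enough that \emph{every} realisation be envy-free, only that \emph{some} realisation is. I therefore expect to need a cardinality/counting argument: the allocated house set has size $3m$, and if for some set $S_j$ only a proper nonempty subset of $\{h_j^1, h_j^2, h_j^3\}$ is allocated, then the agent receiving such a house has no preference in her support placing that house above all other allocated houses (because her top-choice supports are built around \emph{complete} triples of covering sets). This, together with the counting constraint that $3m$ allocated houses must cover all $3m$ element-agents, should force the chosen triples to partition the allocated houses into complete set-gadgets, and hence to form an exact cover. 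Designing the ``tail'' of each preference list carefully so that no cross-triple allocation accidentally satisfies the possibly-EF condition is where the technical care lies.
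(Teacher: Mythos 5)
Your NP-membership argument is fine, but the hardness half has a genuine gap: the construction you propose (reusing the element-to-agent, set-to-triple scaffolding from the proof of Theorem~\ref{theorem:Lottery-CertainlyEF}) does not work for \textsc{ExistsPossiblyEF}, and the counting argument you hope will rescue the converse direction is false for that gadget. In that construction, for each of the three sets $S_{i_1},S_{i_2},S_{i_3}$ containing element $u_i$, agent $i$ has a preference in her support whose \emph{globally top} house is her slot-house $h_{i_j}^{l(j)}$. So if every agent is simply assigned her slot-house in an arbitrary one of her three sets (this is always a valid matching, since for a fixed set and slot only one element occupies that slot), each agent has a supported preference ranking her house above \emph{everything}, and the allocation is possibly EF --- regardless of whether the chosen sets cover anything or whether triples are allocated in full. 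Your claim that ``the agent receiving such a house has no preference in her support placing that house above all other allocated houses'' when a triple is only partially allocated is therefore wrong; the forcing in the certainly-EF proof comes precisely from quantifying over \emph{all} preferences in the support (e.g., the agent who ranks $h_j^2$ first and $h_j^1$ second forces $h_j^2$ to be allocated to her once $h_j^1$ is allocated), and that mechanism evaporates under the existential quantifier. As constructed, your reduction always outputs ``yes.''

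The paper takes a different route that avoids this trap: it reduces \textsc{Minimum Coverage} to a partial-allocation problem \textsc{ExistsPartialEF} under deterministic binary preferences (adapting Theorem 3.5 of Kamiyama et al.), then simulates each binary indifference class by a lottery over rotations of that class (so that ``possibly ranked first within the class'' exactly captures the binary semantics), obtaining hardness of \textsc{ExistsPartialPossiblyEF}, and finally absorbs the ``only $k$ agents are housed'' relaxation by adding $n-k$ dummy houses $e_1,\dots,e_{n-k}$ together with extra supported preferences in which some $e_j$ is top-ranked. If you want to salvage a direct reduction from exact cover, you would need to redesign the preference supports so that no agent ever has a supported preference whose global top is an allocatable house independent of the rest of her triple --- a substantial change you have not supplied.
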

 
 We prove Theorem~\ref{theorem:Lottery-PossiblyEF} via a sequence of reductions starting from {\sc{Minimum Coverage}}, which is known to be NP-hard \cite{vinterbo2002}.
For this purpose, we introduce two new problems:
 \begin{itemize}
 \item  {\sc{ExistsPartialEF}} under binary preferences. In this problem, there is a set $[n]$ of agents and a set $H$ of $m$ houses. Each agent has deterministic binary preferences over the houses: in particular, each agent $i$ partitions the houses into two sets $A_i, B_i$ where houses in the same set are valued equally, and $h_a \succ_i h_b$ for all $h_a \in A_i$ and $h_b \in B_i$. 
 
 Additionally, an integer $k$ is supplied (with $k \leq n$). 
 The goal is to determine whether there exists an allocation of houses to $k$ agents such that these $k$ agents are envy-free. The $n-k$ agents without a house do not experience envy. 
 
 \item {\sc{ExistsPartialPossiblyEF}} under the lottery model. This problem is similar to {\sc{ExistsPossiblyEF}}, however an additional integer $k$ is supplied (with $k \leq n$). The goal is to determine whether there exists an allocation of houses to $k$ agents such the probability of envy-freeness is nonzero. The $n-k$ agents without a house do not experience envy. 
 \end{itemize}
 
We show that {\sc{Minimum Coverage}} reduces to {\sc{ExistsPartialEF}}, which in turn reduces to {\sc{ExistsPartialPossiblyEF}} which finally reduces to {\sc{ExistsPossiblyEF}}. To prove the next lemma, we use a modification of the proof from Theorem 3.5 of \citet{KMW21}. They prove hardness for a similar problem, where all $n$ agents must be allocated houses with the requirement that at least $k$ agents are envy-free. 
This is detailed in the appendix.

\begin{lemma}\label{lem:ExistsPartialEF}
	With binary preferences, {\sc{ExistsPartialEF}} is NP-hard.
\end{lemma}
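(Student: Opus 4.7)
The plan is to reduce from the NP-hard problem \textsc{Minimum Coverage} \citep{vinterbo2002}: given a universe $U$, a family $\mathcal{F}=\{S_1,\ldots,S_M\}$ of subsets of $U$, and integers $p$ and $q$, decide whether some $p$ sets of $\mathcal{F}$ have union of size at most $q$. I will translate such an instance into a binary-preference instance of \textsc{ExistsPartialEF} by adapting the construction used by \citet{KMW21} in the proof of their Theorem~3.5. Their theorem addresses a closely related setting in which every agent must receive a house and at least $k$ of them must be envy-free; the modification I need is to move to the partial-allocation regime, where only $k$ agents receive a house and all of those $k$ must be envy-free.

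The construction introduces, for each element $u_i\in U$, an \emph{element-agent} $a_i$ and, for each set $S_j\in\mathcal{F}$, a \emph{set-house} $c_j$, with binary preference $A_{a_i}=\{c_j:u_i\in S_j\}$ and $B_{a_i}$ the remaining houses. To rule out trivial solutions, I augment the instance with a calibrated pool of \emph{dummy agents} and \emph{dummy houses}: each dummy agent approves only a single private dummy house, so dummies can always be matched to their private house without inducing or suffering envy, and these dummy houses lie in every $B_{a_i}$. The target $k$ is set so that any partial EF allocation of $k$ agents must use exactly $p$ set-houses, with at most $q$ element-agents assigned to them. In the forward direction, given a subfamily $T\subseteq\mathcal{F}$ with $|T|=p$ and $|\bigcup T|\le q$, I match each element-agent $a_i\in\bigcup T$ to some $c_j\in T$ containing $u_i$ (so each such $a_i$ receives a top-class house and envies no one), and fill the remaining $k$-quota with dummies assigned to their private houses. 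In the reverse direction, any partial EF allocation of size $k$ must, by the numerology of the gadget, occupy exactly $p$ set-houses; the set of occupied set-houses yields a subfamily of $\mathcal{F}$ of size $p$ whose union, given by the allocated element-agents, has size at most $q$.

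The main obstacle is this \emph{forcing/counting} step. Because binary preferences are coarse, an element-agent that receives \emph{any} house in its top class $A_{a_i}$ is automatically envy-free, so unless the construction caps the number of set-houses that may appear in the allocation, the reduction collapses (one could just assign each element-agent a house it approves). The dummy gadget must therefore simultaneously: (i)~tie up enough slots in the size-$k$ allocation that only $p$ set-houses can be used; (ii)~keep the allocated dummies envy-free with respect to each other and the element-agents; and (iii)~leave enough flexibility that any covering solution to \textsc{Minimum Coverage} can be realised. Producing such a gadget---an adaptation of the counting construction of \citet{KMW21} to binary preferences and partial allocation---is the technical heart of the proof and is the step I expect to be the most delicate to get right, particularly in balancing the dummy counts against $p$, $q$, $N$, $M$, and $k$.
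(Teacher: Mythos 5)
There is a genuine gap here: you have picked the right source problem (\textsc{Minimum Coverage}) and the right template (adapting Theorem~3.5 of \citet{KMW21}), but the step you yourself flag as ``the technical heart of the proof'' --- the counting gadget that forces the allocation to encode a valid cover --- is exactly the content of the lemma, and it is left unconstructed. Worse, the partial sketch you do give is oriented the wrong way. You propose to match each \emph{covered} element-agent $a_i\in\bigcup T$ to some approved set-house $c_j\in T$. This fails on two counts: a single set-house can be allocated to only one agent while a single set covers many elements, so this matching is not injective; and an element-agent holding an approved house is trivially envy-free, so such an assignment certifies nothing about the union being small. Your dummy agents, each approving a private dummy house, are always satisfiable and hence impose no constraint on how many set-houses get used; no choice of $k$ built only from these ingredients forces ``exactly $p$ set-houses.''

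The paper's reduction inverts the logic. It introduces one \emph{set-agent} $a_t^*$ per subset, approving only its own set-house $h_t^*$; a short exchange argument shows that, without loss of generality, any allocated $h_t^*$ goes to $a_t^*$, so element-agents can never grab set-houses. It then adds exactly $|E|-q+d-\ell$ bottom-class filler houses (lying in every agent's $B$-set) and sets $k=|E|-q+d$. The counting then does all the work: at least $\ell$ set-houses must be allocated to reach the quota, and at least $|E|-q$ element-agents must receive filler houses. Such an element-agent is envy-free \emph{iff none of its approved set-houses is allocated}, i.e.\ iff its element is \emph{uncovered} by the chosen subfamily $I$ --- so envy-freeness of these agents directly certifies $|\bigcup_{t\in I}S_t|\le q$. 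It is the uncovered elements, not the covered ones, that must be counted; without that inversion and without the per-subset agents absorbing the set-houses, the reduction does not go through.
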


For the next lemma, we reduce from {\sc{ExistsPartialEF}} under binary preferences, which is NP-hard from Lemma~\ref{lem:ExistsPartialEF}.
\begin{lemma}\label{lem:ExistsPartialPossiblyEF}
	For the lottery model, {\sc{ExistsPartialPossiblyEF}} is NP-hard.
\end{lemma}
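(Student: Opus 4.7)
My plan is to reduce from \textsc{ExistsPartialEF} under binary preferences, whose NP-hardness is Lemma~\ref{lem:ExistsPartialEF}. Given a binary-preference instance with agents $N$, houses $H$, partitions $(A_i, B_i)_{i \in N}$, and target $k$, I would construct a lottery instance on the same $N$, $H$, and $k$, and define each $\Delta_i$ to be the uniform distribution over the following $|H|$ linear extensions of the binary preference $A_i \succ B_i$: for every house $h \in H$, include the extension that places $h$ first within its own class, with the remaining houses of that class listed in a fixed canonical order below $h$, and the other class appearing in a fixed canonical order, always maintaining $A_i$ above $B_i$. The total support has size $O(nm)$, so the reduction runs in polynomial time.

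The key equivalence I would prove is: the binary instance admits a partial EF allocation on $k$ agents if and only if the lottery instance admits an allocation to $k$ agents with positive EF probability. For the forward direction, take a binary-EF allocation $\omega$ and, for each matched agent $i$, select a realization of $\succ_i$ from the support as follows. If $\omega(i) \in A_i$, pick the extension with $\omega(i)$ atop its class; then $\omega(i)$ dominates every other house in $i$'s realized preference. If $\omega(i) \in B_i$, binary EF forces every other matched $\omega(j)$ to lie in $B_i$ as well (otherwise $i$ strictly envies $j$), so picking the extension with $\omega(i)$ atop $B_i$ places $\omega(i)$ above every matched house. By the independence of the $\Delta_i$'s, this joint realization occurs with positive probability, witnessing positive-probability EF for $\omega$.

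For the reverse direction, suppose $\omega$ has positive EF probability; some joint realization in the support must then make $\omega$ EF. Because every preference in every $\Delta_i$ is a linear extension of the binary relation, whenever $\omega(i) \in B_i$ and a matched $\omega(j) \in A_i$, we would have $\omega(j) \succ_i \omega(i)$ in \emph{every} realization, contradicting EF. Hence $\omega$ satisfies the binary-EF condition.

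The delicate part is designing a support that is simultaneously polynomial and rich enough to realize every binary-EF allocation as a positive-probability EF outcome. The one-extension-per-top-of-class construction is the minimal recipe I can see that uniformly handles both the $\omega(i) \in A_i$ and $\omega(i) \in B_i$ cases; once it is in place, the equivalence is essentially bookkeeping built on the observation that the support contains only linear extensions of the binary preferences.
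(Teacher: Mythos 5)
Your reduction is correct and is essentially identical to the paper's: both reduce from {\sc ExistsPartialEF} under binary preferences and give each agent one linear extension per house that places that house atop its own class while keeping $A_i$ above $B_i$, then argue that binary envy-freeness coincides with possible envy-freeness because every realization extends the binary order and every house can be realized as the top of its class. No gaps; the case split on $\omega(i)\in A_i$ versus $\omega(i)\in B_i$ in your forward direction is the same bookkeeping the paper performs.
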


\noindent
We are now ready to prove Theorem~\ref{theorem:Lottery-PossiblyEF}. 
\begin{proof}[Proof of Theorem~\ref{theorem:Lottery-PossiblyEF}]
	The problem  {\sc{ExistsPossiblyEF}} is in NP because it can be checked in polynomial time whether an allocation is possibly EF or not.
	
	To prove NP-hardness, we reduce from {\sc{ExistsPartialPossiblyEF}} under the lottery model, which is NP-hard from Lemma~\ref{lem:ExistsPartialPossiblyEF}.
	Consider an instance $I$ of {\sc{ExistsPartialPossiblyEF}} under the lottery model with $n$ agents and $m$ houses $\{h_1, \ldots, h_m\}$ and parameter $k$.
	We construct an instance $I'$ of {\sc{ExistsPossiblyEF}} under the lottery model with $n$ agents and $m+n-k$ houses, where the agents in $I'$ correspond to agents in $I$.
	The houses are $\{h_1, \ldots, h_m\} \cup \{e_1, \ldots, e_{n-k}\}$.
	
	Consider some agent $i$. 
	Assume that this agent has $\ell$ preference lists in $I$, and that the $j$-th such preference list is $a_{j, 1} \succ \ldots \succ a_{j, m}$.
	In the instance $I'$, agent $i$ has $\ell+n-k$ preference lists, each with equal probability:
	\begin{itemize}
		\item For each $j \in [\ell]$, we have the list $a_{j, 1} \succ \ldots \succ a_{j, m} \succ e_1 \succ \ldots \succ e_{n-k}$.
		\item For each $j \in [n-k]$, we have the list $e_j \succ e_1 \succ \ldots \succ e_{j-1} \succ e_{j+1} \succ \ldots \succ e_{n-k} \succ a_{1, 1} \succ \ldots \succ a_{1, m}$. Note that the $e_1 \succ \ldots \succ e_{j-1}$ segment is empty if $j = 1$, and the $e_{j+1} \succ \ldots \succ e_{n-k}$ segment is empty if $j = n-k$.
	\end{itemize}
	
	We now prove the answer to $I$ is ``yes'' if and only if the answer to $I'$ is ``yes''.
	Assume the answer to $I$ is ``yes'' and we have an allocation where agents $u_1, \ldots, u_k$ are each allocated house $\omega(u_i)$ in a possibly envy-free way. 
	Let $u_{k+1}, \ldots, u_n$ be the agents that were not allocated a house. 
	
	We create an allocation $\omega'$ for $I'$ as follows:
	\begin{itemize}
		\item For each $i \in [k]$, $\omega'(u_i) = \omega(u_i)$,
		\item For each $i \in [k+1, n]$, $\omega'(u_i) = e_{i-k}$.
	\end{itemize}
	Consider some agent $i \in [k]$, where $\omega'(u_i) = \omega(u_i)$. Since agent $i$ is possibly envy-free in $I$, there exists some preference list in $I$ where $\omega(u_i)$ is preferred over every other allocated house.
	However, this corresponds to a preference list in $I'$ where $\omega'(u_i)$ is preferred to all the other houses allocated in $\omega'$.
	Hence, agent $u_i$ is possibly envy-free in $I'$. 
	Now, consider some $i \in [k+1, n]$. There exists a preference list in $I'$ where $\omega'(u_i) = e_{i-k}$ is the most preferred house, and so agent $u_i$ is possibly envy-free in $I'$.
	
	For the other direction, suppose we have a possibly envy-free allocation in $I'$.
	At most $n-k$ agents were allocated houses in $\{e_1, \ldots, e_{n-k}\}$ and so at least $k$ agents were allocated houses in $\{h_1, \ldots, h_m\}$. These agents remain possibly envy-free if they are allocated the same house in $I$.  
\end{proof}

As a corollary, we get the following result.

\begin{corollary} For the lottery model, there is no polynomial-time algorithm with bounded multiplicative approximation ratio for {\sc Max-ProbEF}, assuming P$\neq$NP.
\end{corollary}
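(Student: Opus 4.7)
The plan is to derive the corollary as a direct consequence of Theorem~\ref{theorem:Lottery-PossiblyEF} (NP-completeness of \textsc{ExistsPossiblyEF}) together with Proposition~\ref{prop:efprob} (efficient evaluation of a single allocation's EF probability). The key observation is that any bounded multiplicative approximation for a maximization problem must distinguish between optimum value zero and optimum value strictly positive.

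More concretely, I would argue by contradiction. Suppose there exists a polynomial-time algorithm $A$ and a finite $c\geq 1$ such that on every instance $A$ returns an allocation $\omega_A$ with $\Pr[\omega_A \text{ is EF}] \geq p^*/c$, where $p^*$ is the maximum EF-probability over all allocations. Given any instance $I$ of \textsc{ExistsPossiblyEF} under the lottery model, run $A$ on $I$ to obtain $\omega_A$ and then use Proposition~\ref{prop:efprob} to compute the EF-probability of $\omega_A$ exactly in polynomial time. Output ``yes'' if this probability is strictly positive, and ``no'' otherwise.

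For correctness: if $I$ is a ``yes'' instance, then $p^* > 0$ and hence $\Pr[\omega_A \text{ is EF}] \geq p^*/c > 0$, so we correctly answer ``yes''; if $I$ is a ``no'' instance, then $p^* = 0$ and so every allocation, including $\omega_A$, has EF-probability $0$, yielding ``no''. This would decide \textsc{ExistsPossiblyEF} in polynomial time, contradicting Theorem~\ref{theorem:Lottery-PossiblyEF} under $\mathrm{P}\neq\mathrm{NP}$.

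There is no real obstacle here beyond being careful that ``bounded multiplicative approximation'' is interpreted in the standard way (ratio at most some finite $c$, possibly depending on the instance size but finite on every instance) and that we actually have an exact polynomial-time evaluator for the objective; both are supplied by the preceding results. The whole argument is essentially a one-line gap-preserving reduction from the decision problem to the approximation problem.
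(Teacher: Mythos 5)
Your argument is correct and is exactly the (implicit) argument the paper intends: the corollary is stated as an immediate consequence of Theorem~\ref{theorem:Lottery-PossiblyEF}, via the standard observation that a bounded multiplicative approximation must separate optimum zero from optimum positive, with Proposition~\ref{prop:efprob} supplying the polynomial-time evaluation of the returned allocation. Nothing further is needed.
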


%

\section{Compact Indifference Model}
In this section, we show that, in contrast to the lottery model, {\sc ExistsPossiblyEF} and  {\sc ExistsCertainlyEF} are polynomial-time solvable for the Compact Indifference model. However, perhaps surprisingly,  multiplicative approximation to {\sc Max-ProbEF}  still remains hard.  We will then prove that for any fixed $\epsilon >0$, there exists a polynomial-time algorithm such that it either computes the optimal solution for  {\sc Max-ProbEF}  exactly or returns a certificate showing that  every allocation has probability of EF less than $\epsilon$.

\subsection{Complexity Results}


%
%
%

Our first complexity result is a strong inapproximability one. 

\begin{theorem}\label{prop:hardness-of-highestEF-under-compact-indifference}
	Let $f(n, m)$ be a polynomial in the number of agents and houses. 
	Then, finding an $f(n, m)$-approximation of {\sc Max-ProbEF} under the compact indifference model is NP-hard.
\end{theorem}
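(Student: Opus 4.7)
The plan is the standard two-step reduction: first establish a constant-factor inapproximability gap, then amplify it to a polynomial gap using disjoint copies. Concretely, I would first show that there exist absolute constants $\alpha > \beta > 0$ for which distinguishing instances of {\sc Max-ProbEF} (under compact indifference) with optimum $\geq \alpha$ from those with optimum $\leq \beta$ is NP-hard. Amplification then inflates the ratio $\alpha/\beta$ to exceed any prescribed polynomial $f(n,m)$.

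For the basic gap reduction I would start from an NP-hard covering or matching problem, e.g., Exact Cover by 3-Sets as used in the proof of Theorem~\ref{theorem:Lottery-CertainlyEF}. Universe elements would become agents and each subset would contribute a cluster of three houses $\{h_j^1, h_j^2, h_j^3\}$. Weak preferences would be engineered using ties so that (i) when a valid cover exists, there is an allocation in which each agent's assigned house sits in a top tier tied only with houses not allocated to anyone else, making that agent certainly non-envious; and (ii) when no valid cover exists, every allocation forces at least one agent to share a top-tier tie with another allocated house, and the symmetry of uniformly random linear extensions of a single weak order then implies that this agent's non-envy probability drops by at least a constant factor (say $\leq 1/2$), along the lines of the probability computation in the proof of Proposition~\ref{prop: Restrict}. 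A subtle but crucial point is that {\sc ExistsCertainlyEF} is polynomial-time solvable for this model, so YES instances cannot be allowed to satisfy $\mathrm{OPT}=1$; this is handled by appending a fixed universal ``noise'' gadget --- for instance, two extra agents with identical weak preferences tying their two top houses --- that caps $\mathrm{OPT}$ at some constant $\alpha<1$ on every input while preserving the constant multiplicative gap between YES and NO.

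For amplification, I would take $k = \Theta(\log f(n,m))$ vertex-disjoint copies of the basic instance, with each agent's weak preference placing all houses outside its own copy into a single indifference tier at the bottom. A short exchange argument shows we may restrict attention to allocations respecting the copy partition, and the EF probability then factorizes across copies because agents' preferences are independent in the compact indifference model. YES instances therefore achieve $\mathrm{OPT} \geq \alpha^k$ while NO instances have $\mathrm{OPT} \leq \beta^k$, so the multiplicative gap exceeds $f(n,m)$ on an instance whose size remains polynomial in the original input. The main obstacle is the basic reduction itself: the compact indifference model is less flexible than the lottery model because each agent's distribution is forced to be uniform over linear extensions of a single weak order, so the designer must craft ties that provably cause a constant non-envy loss in NO instances without collapsing YES instances to certain EF --- the latter being excluded by the polynomial-time solvability of {\sc ExistsCertainlyEF}.
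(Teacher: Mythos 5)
There is a fatal flaw in your first step, and it is exposed by the paper's own algorithmic result. You propose to establish NP-hardness of the gap problem ``distinguish $\mathrm{OPT}\geq\alpha$ from $\mathrm{OPT}\leq\beta$'' for \emph{absolute constants} $\alpha>\beta>0$, and then amplify. But \Cref{thm:compact-indiff-additive} gives, for any fixed $\epsilon>0$, a polynomial-time algorithm that either computes $\mathrm{OPT}$ exactly or certifies $\mathrm{OPT}<\epsilon$. Running it with $\epsilon=\beta$ decides your gap problem in polynomial time: on a YES instance ($\mathrm{OPT}\geq\alpha>\beta$) it returns the exact optimum, which is $\geq\alpha$; on a NO instance it either returns null or an exact value $\leq\beta$. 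So the constant-threshold gap problem is in P, and it cannot be NP-hard unless P $=$ NP. You correctly noticed that the polynomial-time solvability of \textsc{ExistsCertainlyEF} forbids $\mathrm{OPT}=1$ on YES instances and patched that with a noise gadget, but the stronger algorithmic result kills the \emph{entire} constant-$\mathrm{OPT}$ regime, not just $\mathrm{OPT}=1$. Any valid hardness reduction for this model must already produce instances whose optimum is sub-constant (in fact inverse-superpolynomial), so the ``constant gap then amplify by $\Theta(\log f)$ disjoint copies'' template cannot be made to work here.

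For comparison, the paper's proof of \Cref{prop:hardness-of-highestEF-under-compact-indifference} is a single direct reduction from \textsc{Independent Set} in which the optimum is exponentially small on both YES and NO instances. Each vertex $v$ yields an agent with preference $t_v\sim f_v$; ``penalty gadgets'' built from tied houses force any allocation that uses a forbidden house to route a block of auxiliary agents through a tie, each such agent being envy-free only with probability $1/4$; replicating each gadget $\alpha=49ar^2|V||E|$ times makes the multiplicative YES/NO gap equal to $4^{\alpha}$, which by choice of $\alpha$ exceeds the bound $a(n+m)^r\geq f(n,m)$. The gap is manufactured in one shot at a super-polynomial scale rather than by composing a constant gap with independent copies. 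If you want to salvage your plan, the amplification idea (disjoint copies, factorization of EF-probability by independence) is sound machinery, but the base reduction must itself separate two inverse-superpolynomial values of $\mathrm{OPT}$, which is essentially what the paper's gadget replication accomplishes.
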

\begin{proof}	
	We begin by describing how to find the probability that a given allocation is EF.
	In particular, consider some allocation $\omega$ where agent $i$ is allocated house $\omega(i)$.
	Firstly, if $\omega(j) \succ_i \omega(i)$ for any $i, j \in N$ then the allocation is envy-free with probability 0.
	Otherwise, for each agent $i$, let $X_i = \{ j \in N : \omega(i) \sim_i \omega(j) \}$. 
	Then, the probability of agent $i$ being envy-free is $\frac{1}{|X_i|}$, and, by independence, the probability that the allocation is envy-free is $\Pi_{i \in N}   \frac{1}{|X_i|}$. 
	   
	We prove NP-hardness via a reduction from {\sc Independent Set}.
	Firstly, for convenience, we only provide partial preference lists in the proof. 
	In particular, for each agent $i$, we provide a subset of houses $H' \subseteq H$ and a weak preference list over these houses. 
	For all other houses in $H \setminus H'$, we assume that agent $i$ values these strictly worse than all houses in $H'$, and that agent $i$ cannot be allocated any house in $H \setminus H'$ in a possibly-EF allocation.
	To do this, we introduce two new agents $a_1, a_2$ and two new houses $e_1, e_2$. Agents $a_1$ and $a_2$ have the following preferences:
	\begin{itemize}
	\item $a_1$: $e_1 \succ_{a_1} e_2 \succ_{a_1}$ all houses in $H$, in some arbitrary strict ordering.
	\item $a_2$: $e_2 \succ_{a_2} e_1 \succ_{a_2}$ all houses in $H$, in the same arbitrary strict ordering.
	\end{itemize}
	Then, any possibly-EF allocation $\omega$ has $\omega(a_1) = e_1$ and $\omega(a_2) = e_2$.
	Now, reconsider agent $i$ and the subset $H' \subseteq H$ of houses that they have a preference list over. 
	We can extend this preference list into a complete preference list over all houses in $H \cup \{e_1, e_2\}$. In particular, assume that agent $i$'s preference list over $H'$ is $h_1 \succ_i \ldots \succ_i h_{|H'|}$ (note that this list could include weak preferences). Then, agent $i$'s preference list over $H \cup \{e_1, e_2\}$ is: $h_1 \succ_i \ldots \succ_i h_{|H'|} \succ_i e_1 \succ_i e_2 \succ_i$ all houses in $H \setminus H'$, in any arbitrary order. 
	Then, agent $i$ cannot be allocated any house in $H \setminus H'$ (otherwise, agent $i$ would envy agents $a_1$ and $a_2$), and so for convenience we omit these houses from the preference list.
	
	We now describe the reduction. 
	First, note that any polynomial $f(n, m)$ can be upper bounded by a function of the form $a(n+m)^r$ for some positive integers $a$ and $r$. So, we assume that $f(n, m)$ is of this form. 
	Now, consider an instance $I$ of {\sc Independent Set} with a graph $G = (V, E)$, and a target $k$. The goal is to determine if there exists an independent set in $G$ of size $k$.

	We construct an instance $I'$ of {\sc Max-ProbEF} as follows. 	
	For each vertex $v \in V$, we introduce two houses $t_v$ and $f_v$, and an agent $a_v$.
	We will design our instance so that house $t_v$ will be allocated to agent $a_v$ if $v$ is in the independent set, and $f_v$ will be allocated to agent $a_v$ otherwise. We will show later that no other agent can be allocated $t_v$ or $f_v$.
	In particular, agent $a_v$'s preference list is $t_v \sim_{a_v} f_v$. 
	
	Our goal is to find a large independent set, and so we would rather house $t_v$ be allocated.
	We do this using a \emph{single penalty gadget}, where we apply a penalty for allocating house $f_v$. 
	In a single penalty gadget, we add two new agents $a_1, a_2$ and four new houses $e_1, e_2, e_3, e_4$:
		\begin{itemize}
			\item $a_1$'s preference list: $e_1 \sim_{a_1} e_2 \succ_{a_1} f_v \succ_{a_1} e_3$.
			\item $a_2$'s preference list: $e_1 \sim_{a_2} e_2 \succ_{a_2} f_v \succ_{a_2} e_4$.
		\end{itemize}
	First, note that house $f_v$ cannot be allocated to agent $a_1$ nor $a_2$, because doing so is impossible without one of the agents being envious. 
	Now, if $f_v$ is unallocated, then $a_1$ can be allocated house $e_3$ and $a_2$ can be allocated house $e_4$, so that both agents are certainly EF. Otherwise, if $f_v$ is allocated, then agents $a_1$ and $a_2$ must be allocated houses $e_1$ and $e_2$, giving each agent a $\frac{1}{2}$ probability of EF. Hence, this gadget multiplies the probability of EF by $\frac{1}{4}$ if house $f_v$ is allocated.
	
	Let $\alpha = 49ar^2  |V| |E|$.
We create $\alpha$ copies of the single penalty gadget for each house $f_v$. Hence, if $f_v$ is allocated (and hence, vertex $v$ is not in the independent set), then the probability of EF is multiplied by $\frac{1}{4^{\alpha}}$. 
	
	Now, for each edge $uv \in E$, either vertex $u$ or $v$ must not be in the independent set. 
	Thus, at least one of $t_u$ and $t_v$ must be unallocated. 
	We use $\emph{double penalty gadgets}$ for this purpose. 
	A double penalty gadget is built for two houses $h_1, h_2$ and adds  four new agents $a_1, a_2, a_3, a_4$ and eight new houses $e_1, e_2, e_3, e_4, e_5, e_6, e_7, e_8$:
		\begin{itemize}
			\item $a_1$'s preference list: $e_1 \sim_{a_1} e_2 \sim_{a_1} e_3 \sim_{a_1} e_4 \succ_{a_1} h_1 \succ_{a_1} e_5.$
			\item $a_2$'s preference list: $e_1 \sim_{a_2} e_2 \sim_{a_2} e_3 \sim_{a_2} e_4 \succ_{a_2} h_1 \succ_{a_2} e_6.$
			\item $a_3$'s preference list: $e_1 \sim_{a_3} e_2 \sim_{a_3} e_3 \sim_{a_3} e_4 \succ_{a_3} h_2 \succ_{a_3} e_7.$
			\item $a_4$'s preference list: $e_1 \sim_{a_4} e_2 \sim_{a_4} e_3 \sim_{a_4} e_4 \succ_{a_4} h_2 \succ_{a_4} e_8.$
		\end{itemize}
	First, note that neither $h_1$ nor $h_2$ can be allocated to any of these agents. 
	In particular, without loss of generality, if $h_1$ is allocated to agents $a_1$ or $a_2$, then at least one of agent $a_1$ and $a_2$ will always be envious of the other. 
	Now, if either $h_1$ or $h_2$ is allocated to some other agent, then one of the houses $e_1, e_2, e_3, e_4$ will be allocated to $a_1, a_2, a_3$ or $a_4$. Then, to avoid envy, all the agents $a_1, a_2, a_3, a_4$ will be allocated houses $e_1, e_2, e_3, e_4$ in some permutation. 
	Therefore each agent will be EF with probability $\frac{1}{4}$, and so all the agents are EF with probability $\frac{1}{256}$. 
	However, if both $h_1$ and $h_2$ are unallocated, then agents $a_1, a_2, a_3, a_4$ can be allocated houses $e_5, e_6, e_7, e_8$ respectively, and will all be EF with probability $1$. 
	Hence, this gadget multiplies the EF-probability by $\frac{1}{256}$ if either of $h_1$ or $h_2$ are allocated.
	
	For each edge $uv \in E$, we add $|V|\alpha$ copies of this gadget for the following pairs of houses: $(t_u, t_v)$, $(t_u, f_v)$, $(f_u, t_v)$. 
	We claim that these gadgets together provide a penalty if both houses $t_u$, $t_v$ are allocated:
	\begin{itemize}
		\item If both $t_u$ and $t_v$ are allocated, then all $3|V|\alpha$ gadgets provide penalty. Hence, the EF-probability is multiplied by $\frac{1}{256^{3|V|\alpha}}$.
		\item If at most one of $t_u$ and $t_v$ are allocated, then exactly $2|V|\alpha$ of the gadgets provide penalty. In particular, if neither $t_u$ and $t_v$ are allocated, then both $f_u$ and $f_v$ will be allocated, and so the $(t_u, f_v)$ and $(f_u, t_v)$ gadgets provide penalty, but the $(t_u, t_v)$ gadgets do not. On the other hand, if $t_u$ is allocated but $t_v$ is unallocated, then $f_v$ is allocated and so the $(t_u, t_v)$ and $(t_u, f_v)$ gadgets provide penalty but the $(f_u, t_v)$ gadgets do not. Finally, the case when $t_u$ is unallocated but $t_v$ is allocated is symmetric. Thus, the EF-probability is multiplied by  $\frac{1}{256^{2|V|\alpha}}$.
	\end{itemize}
	Therefore, if both $t_u$ and $t_v$ are allocated, then the EF-probability is multiplied by an additional $\frac{1}{256^{|V|\alpha}}$, compared to the case where this does not happen.
	
	This completes the description of the instance $I'$.
	It can be shown that the {\sc Max-ProbEF} instance $I'$ is polynomial in size, and has the same answer as the {\sc Independent Set} instance $I$. This is detailed in the appendix.
\end{proof}

Next, we complement the  above result by showing that  additive approximations for {\sc Max-ProbEF} are computationally tractable.

\subsection{Algorithm for {\sc Max-ProbEF} }
Let $\mathsf{OPT}$ be EF probability of the optimal solution to {\sc Max-ProbEF}.
\begin{theorem}	\label{thm:compact-indiff-additive}
For any fixed $\epsilon >0$, there exists an algorithm running in polynomial time such that it either 
\begin{itemize}
\item computes $\mathsf{OPT}$ exactly, or
\item returns a certificate that $\mathsf{OPT}<\epsilon$ i.e., every allocation has probability of EF less than $\epsilon$ 
\end{itemize}
\end{theorem}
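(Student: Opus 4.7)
The plan is to exploit the fact that if $\mathsf{OPT}\ge\epsilon$, the optimal allocation has its ``tie structure'' confined to a constant number of agents (depending only on $\epsilon$); we enumerate this small ``core'' by brute force and complete the rest via the earlier polynomial-time algorithm for \textsc{ExistsCertainlyEF} under compact indifference. By Proposition~\ref{prop:efprob}, an allocation $w$ with positive EF probability has $\Pr[\text{EF}](w) = \prod_{i \in N} 1/|X_i(w)|$ where $X_i(w) = \{j \in N : w(j) \sim_i w(i)\}$. So $\Pr[\text{EF}](w) \ge \epsilon$ forces $\prod_i |X_i(w)| \le 1/\epsilon$, whence the tied set $T^*(w) = \{i : |X_i(w)| \ge 2\}$ satisfies $|T^*(w)| \le \log_2(1/\epsilon)$ and the involved agents $N^+(w) = \bigcup_{i\in T^*(w)} X_i(w)$ satisfy $|N^+(w)| \le \sum_{i \in T^*(w)} |X_i(w)| \le 1/\epsilon$, a constant $C(\epsilon)$ for fixed $\epsilon$.

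\textbf{Enumeration and sub-problem.} Enumerate all subsets $N^+ \subseteq N$ with $|N^+| \le C(\epsilon)$ and all injective partial allocations $w^+: N^+ \to M$, giving $O((nm)^{C(\epsilon)})$ candidate pairs. For each pair that is possibly EF within $N^+$, try to optimally extend $w^+$ to $N' := N \setminus N^+$ by defining
\begin{align*}
M^{\mathrm{allowed}} &= \{h \in M \setminus w^+(N^+) : h \prec_k w^+(k) \text{ strictly, }\forall k \in N^+\}, \\
M_i^{\mathrm{allowed}} &= \{h \in M^{\mathrm{allowed}} : h \succ_i w^+(j) \text{ strictly, }\forall j \in N^+\}, \quad i \in N',
\end{align*}
and seeking an injective $w' : N' \to M^{\mathrm{allowed}}$ with $w'(i) \in M_i^{\mathrm{allowed}}$ and $w'(i) \succ_i w'(j)$ strictly in the original weak order for all distinct $i,j \in N'$. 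These conditions precisely capture that no house handed to $N'$ creates new ties at $N^+$, no $i \in N'$ envies any $j \in N^+$, and every $i \in N'$ satisfies $|X_i(w)|=1$ in the combined allocation.

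\textbf{Sub-problem via \textsc{ExistsCertainlyEF}.} Build a modified compact-indifference instance on $(N', M^{\mathrm{allowed}})$ in which agent $i$'s weak preferences are the inherited order on $M_i^{\mathrm{allowed}}$, with all of $M^{\mathrm{allowed}} \setminus M_i^{\mathrm{allowed}}$ collapsed into a single tie class placed strictly below every house in $M_i^{\mathrm{allowed}}$. A short case analysis shows that any certainly-EF allocation in this modified instance must assign every $i \in N'$ a house in $M_i^{\mathrm{allowed}}$: otherwise $w'(i)$ lies in the bottom tie class of $i$, and either some other $w'(j)$ lies there too (a tie) or every other $w'(j)$ is in $M_i^{\mathrm{allowed}}$ (so $w'(j)$ strictly dominates $w'(i)$ for $i$) --- both violating certainly-EF. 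The certainly-EF condition in the modified instance is then equivalent to the desired strict envy-freeness of $w'$ in the original: a ``bad'' $w'(j) \notin M_i^{\mathrm{allowed}}$ is handled automatically via $w'(i) \succ_i w^+(k) \succsim_i w'(j)$ for the defining $k \in N^+$. Thus one call to the \textsc{ExistsCertainlyEF} algorithm solves the sub-problem in polynomial time.

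\textbf{Correctness and conclusion.} For each successful $(N^+, w^+)$ with sub-problem solution $w'$, form $w = w^+ \cup w'$; by construction $|X_i(w)| = |[w^+(i)]_i \cap w^+(N^+)|$ for $i \in N^+$ and $|X_i(w)| = 1$ for $i \in N'$, giving a closed-form $\Pr[\text{EF}](w)$. Return the allocation with the highest such probability. If $\mathsf{OPT} \ge \epsilon$ then $(N^+(w^*), w^*|_{N^+(w^*)})$ is among the enumerated candidates and its sub-problem is feasible (witnessed by $w^*|_{N'}$), so the returned value equals $\mathsf{OPT}$; otherwise the returned value is strictly less than $\epsilon$ and certifies $\mathsf{OPT} < \epsilon$. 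The main obstacle is the sub-problem reduction --- verifying that collapsing all forbidden houses into a single bottom tie class correctly forces acceptability and that certainly-EF in the modified instance exactly matches strict envy-freeness in the original.
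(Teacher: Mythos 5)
Your decomposition into a constant-size ``core'' $N^+$ of agents touched by ties plus a strictly-envy-free remainder is sound, and the bounds $|T^*|\le\log_2(1/\epsilon)$ and $|N^+|\le 1/\epsilon$ are correct (the latter follows from $\sum_i x_i\le\prod_i x_i$ when all $x_i\ge 2$, exactly as in the paper's Lemma~\ref{lem:compact indifference envy-matrix}). The enumeration of $O((nm)^{1/\epsilon})$ pairs $(N^+,w^+)$, the observation that the EF-probability depends only on $w^+$, and the collapsing of forbidden houses into a bottom tie class are all fine (modulo the trivial edge case $|N'|\le 1$, where the bottom-class trick does not force $w'(i)\in M_i^{\mathrm{allowed}}$ and you must check acceptability directly).

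The genuine gap is the subroutine you lean on. You reduce the sub-problem to \textsc{ExistsCertainlyEF} under compact indifference and invoke ``the earlier polynomial-time algorithm'' for it --- but no such earlier algorithm exists: in the paper, the polynomial-time solvability of \textsc{ExistsCertainlyEF} for this model is itself a \emph{corollary} of the theorem you are proving (the case $\epsilon=1$). Your own scheme does not bottom out either: at $\epsilon=1$ the core is empty and the sub-problem \emph{is} the whole problem, so the argument is circular as written. The missing content is precisely the algorithmic heart of the paper's proof: finding an allocation under weak preferences in which every agent \emph{strictly} prefers her house to every other allocated house (more generally, an allocation respecting a prescribed strict/weak pattern) is not solved by \citet{GSV19} directly, since that algorithm handles strict input preferences; the paper has to build the \textsc{AllocSatisfyingEnvyMatrix} procedure (\Cref{lem:envyMatrixAlgo}), a modified \citet{GSV19} iteration with a minimal-Hall-violator deletion argument, to get it. If you supplied such an algorithm for your all-strict sub-problem with restricted acceptable sets, your route would go through and would be a legitimate alternative to the paper's enumeration of envy-matrices; without it, the proof is incomplete.
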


%

To prove the theorem, we show that there is a polynomial-time algorithm (\Cref{alg:compact-indiff-additive}) with the above properties. Next, we specify some terminology and machinery for the algorithm and the corresponding proof. 
Consider some instance $I$ under the compact indifference model, and assume that there exists an allocation $\omega$ with a non-zero EF-probability. 
Then, this allocation $\omega$ must be envy-free with respect to the underlying weak deterministic preferences, that is, $\omega(i) \succsim_i \omega(j)$ for all agents $i, j$.
We define an $n \times n$ binary matrix $A$, which we call the \emph{envy-matrix} of $\omega$, as follows:
$$
A_{i, j} =
\begin{cases}
0, & \text{if $\omega(i) \succ_i \omega(j)$} \\
1, & \text{otherwise (i.e. $\omega(i) \sim_i \omega(j)$)} \\
\end{cases}
$$

\begin{lemma}
\label{lem:compact indifference envy-matrix}
Let $\epsilon$ be a constant satisfying $0 < \epsilon \leq 1$.
Consider an instance under the compact indifference model that admits an allocation $\omega$ with EF-probability at least $\epsilon$. Let $A$ be the envy-matrix of $\omega$. Then:
\begin{enumerate}
	\item The EF-probability of $\omega$ is $\prod_{i=1}^{n} \frac{1}{\sum_{j=1}^n A_{i, j}}$, and
	\item  Excluding the main diagonal, the number of 1s in the envy-matrix is at most $\frac{1}{\epsilon}$.
\end{enumerate}
\end{lemma}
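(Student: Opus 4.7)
The plan is first to record a structural consequence of $\omega$ having non-zero EF-probability: $\omega(i) \succsim_i \omega(j)$ must hold for every pair $i,j$, since any strict envy under the underlying weak order would persist in every linear extension. With this in place, $A_{i,j} = 1$ iff $\omega(j)$ lies in the same $\succsim_i$-tie-class as $\omega(i)$, and $s_i := \sum_{j} A_{i,j}$ is exactly the number of \emph{allocated} houses tied with $\omega(i)$ in agent $i$'s weak order (counting $\omega(i)$ itself). This identifies $s_i$ with the cardinality $|X_i|$ that appears in the proof of Proposition~\ref{prop:efprob}.

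For part 1, I would invoke independence of the per-agent random extensions in the compact indifference model (as used in the proof of Proposition~\ref{prop:efprob}): each agent draws a uniformly random linear extension of their own weak order, independently of the others, so the events ``agent $i$ is non-envious'' are independent across $i$. Agent $i$ is non-envious precisely when $\omega(i)$ is ranked first among the $s_i$ tied allocated houses in agent $i$'s random extension. Restricted to the tie-class containing $\omega(i)$, the random extension is a uniform permutation of that class; by symmetry each of the $s_i$ allocated houses in the class is equally likely to come first, giving probability $1/s_i$, and multiplying yields $\prod_i 1/s_i$.

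For part 2, the hypothesis combined with part 1 gives $\prod_i s_i \le 1/\epsilon$. Since $A_{i,i} = 1$, we have $s_i \ge 1$; setting $t_i := s_i - 1 \ge 0$, the number of $1$'s of $A$ off the diagonal is exactly $\sum_i (s_i - 1) = \sum_i t_i$. Applying the elementary inequality
\[
\prod_{i=1}^{n} (1 + t_i) \;\ge\; 1 + \sum_{i=1}^{n} t_i,
\]
which follows by a one-line induction from $(1+a)(1+b) \ge 1 + a + b$ for $a,b \ge 0$, converts the product bound into $1 + \sum_i t_i \le 1/\epsilon$, so $\sum_i t_i \le 1/\epsilon - 1 < 1/\epsilon$, as required.

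There is no real obstacle here: both parts reduce to an identification of $s_i$ with the tie-class size already used in the model's probability calculation, plus a one-line product-vs-sum inequality. The only point requiring a brief justification is the ``$1/s_i$'' step in part 1 when the weak tie-class containing $\omega(i)$ also contains houses outside the image of $\omega$; the uniform-permutation symmetry argument above handles this without further fuss.
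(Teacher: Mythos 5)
Your proposal is correct. Part~1 follows the same route as the paper: identify $s_i=\sum_j A_{i,j}$ with the number of allocated houses tied with $\omega(i)$ in agent $i$'s weak order, argue that agent $i$ is non-envious with probability $1/s_i$ by symmetry of the uniform linear extension (this is exactly the computation in the proof of Proposition~\ref{prop:efprob}), and multiply across agents by independence; your explicit handling of unallocated houses in the tie-class is a nice touch the paper leaves implicit. Part~2 is where you genuinely diverge. The paper first discards agents with $x_i=1$, then repeatedly applies the pairwise merging inequality $\frac{1}{x_ix_j}\le\frac{1}{x_i+x_j}$ (valid only when both $x_i,x_j\ge 2$, which is why the restriction is needed) to conclude $\sum_{i: x_i\ge 2}x_i\le\frac{1}{\epsilon}$, and reads off the bound from there. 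You instead substitute $t_i=s_i-1\ge 0$, observe the off-diagonal count is exactly $\sum_i t_i$, and apply $\prod_i(1+t_i)\ge 1+\sum_i t_i$ to the hypothesis $\prod_i s_i\le\frac{1}{\epsilon}$. Your version needs no case split and no restriction of the index set, works uniformly over all agents, and yields the marginally sharper bound $\sum_i t_i\le\frac{1}{\epsilon}-1$; the paper's version, for what it is worth, shows the stronger statement that the $x_i$'s exceeding $1$ sum to at most $\frac{1}{\epsilon}$, not just their excesses. Both arguments are elementary and both suffice for the lemma as stated.
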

{For the first condition, a single agent is envy-free with probability $\frac{1}{\sum_{j=1}^n A_{i, j}}$, which is equivalent to the formula described in the proof of \Cref{prop:efprob}. Due to independence, we multiply these values for each agent.
The second condition can be derived from the first condition and the assumption that the EF-probability is at least $\epsilon$. This is detailed in the appendix.}

We utilise envy-matrices to prove \Cref{thm:compact-indiff-additive}. 
To this end, we introduce the {\sc AllocSatisfyingEnvyMatrix} problem. 
In this problem, each agent has weak deterministic preferences over the houses.
Additionally, an $n \times n$ matrix $A$ is supplied.
The goal is to find an allocation $\omega$ satisfying the following condition, or determine that such an allocation does not exist. For all agents $i, j$:
\begin{itemize}
	\item If $A_{i, j} = 1$, then $\omega(i) \succsim_i \omega(j)$.
	\item Otherwise, if $A_{i, j} = 0$, then $\omega(i) \succ_i \omega(j)$.
\end{itemize}
Informally, the goal is to find an allocation that is ``at least as good'' as the envy matrix. 

\begin{lemma}
	\label{lem:envyMatrixAlgo}
	{\sc AllocSatisfyingEnvyMatrix} can be solved in polynomial time.
\end{lemma}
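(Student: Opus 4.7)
The plan is to reduce {\sc AllocSatisfyingEnvyMatrix} to a sequence of polynomial-time bipartite matching problems, parameterized by a \emph{rank vector} $\mathbf{r} = (r_1, \ldots, r_n)$. For each agent $i$, let $\rho_i(h)$ denote the index of the indifference class of house $h$ in agent $i$'s weak preference $\succsim_i$, with class $1$ being the most preferred. Unpacking the definitions in the problem statement, an allocation $\omega$ satisfies the envy-matrix constraints if and only if, setting $r_i = \rho_i(\omega(i))$, we have $\rho_j(\omega(i)) \geq r_j + (1 - A_{j, i})$ for every pair $(i, j)$ with $i \neq j$: the term $1 - A_{j, i}$ is $0$ when $A_{j, i} = 1$ (giving a weak inequality) and $1$ when $A_{j, i} = 0$ (giving a strict inequality).

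Using this characterization, for any candidate $\mathbf{r}$ I would construct the bipartite graph $G_\mathbf{r} = (N \cup H, E_\mathbf{r})$ whose edges are $\{(i, h) : h \in C_i^{r_i} \text{ and } \rho_j(h) \geq r_j + (1 - A_{j, i}) \text{ for all } j \neq i\}$. Verifying both directions against the characterization shows that the $N$-saturating matchings of $G_\mathbf{r}$ are in bijection with the valid allocations having rank vector exactly $\mathbf{r}$. The task therefore reduces to finding some feasible $\mathbf{r}$, after which a standard bipartite matching subroutine retrieves the allocation.

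To locate such an $\mathbf{r}$, I would use a monotone iterative refinement. Start with $\mathbf{r} = (1, 1, \ldots, 1)$ (the most optimistic guess) and run a polynomial-time bipartite matching on $G_\mathbf{r}$. If an $N$-saturating matching is found, return it. Otherwise, Hall's theorem yields a deficient set $S \subseteq N$ with $|N_{G_\mathbf{r}}(S)| < |S|$, and the plan is to argue that at least one coordinate of $\mathbf{r}$ must strictly increase in every feasible $\mathbf{r}^* \geq \mathbf{r}$; incrementing such a coordinate by one preserves the invariant $\mathbf{r} \leq \mathbf{r}^*$ and makes monotone progress. Since each $r_i$ is bounded by the number of indifference classes of $\succsim_i$ (at most $m$), the process halts after $O(nm)$ rounds, so the overall algorithm runs in polynomial time.

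The main obstacle is justifying the increment rule, which is delicate because raising $r_i$ has two competing effects on $G_\mathbf{r}$: it switches agent $i$'s eligible houses to the next class, while simultaneously tightening the admissibility constraints other agents face with respect to $i$. I expect to handle this via a structural analysis of the Hall deficiency --- using an alternating-path / exchange argument to show that the blocking set identifies a specific coordinate of $\mathbf{r}$ that every feasible $\mathbf{r}^* \geq \mathbf{r}$ must relax, and that this coordinate can be located in polynomial time by examining the tight constraints on the deficient set. Once soundness of the increment is established, correctness follows from the monotone invariant and termination follows from the coordinate bound.
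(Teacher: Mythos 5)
Your reduction of the problem to finding a feasible rank vector $\mathbf{r}$ is a correct reformulation, and the claimed bijection between $N$-saturating matchings of $G_{\mathbf{r}}$ and valid allocations with rank vector exactly $\mathbf{r}$ is fine. But the proof has a genuine gap at its load-bearing step: the soundness of the increment rule is asserted as something you ``expect to handle,'' not proved, and it is exactly the part that can fail. For your invariant ($\mathbf{r} \leq \mathbf{r}^*$ for every feasible $\mathbf{r}^*$) to be maintainable, a Hall deficiency in $G_{\mathbf{r}}$ must single out a coordinate $i$ such that \emph{every} feasible rank vector has $r^*_i > r_i$. Nothing in the proposal rules out the feasible region having two incomparable minimal elements, say one with $r^*_1 = r_1$ and another with $r^*_2 = r_2$; in that case no coordinate is forced to increase, and any increment risks discarding all solutions. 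Moreover, the edge set $E_{\mathbf{r}}$ is not monotone in $\mathbf{r}$ --- raising $r_j$ simultaneously moves agent $j$ to a different (not larger) candidate set $C_j^{r_j}$ and tightens the constraint $\rho_j(h) \geq r_j + (1 - A_{j,i})$ imposed on everyone else --- so the alternating-path/exchange arguments that justify greedy increments in matroid-like settings do not apply off the shelf. As written, the argument is a plan rather than a proof, and the search space of rank vectors is exponential, so without the increment lemma there is no polynomial bound.

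The paper sidesteps this difficulty by making monotone progress on the \emph{house set} rather than on a rank vector. It adapts the algorithm of Gan, Suksompong and Voudouris: maintain a set $H'$ of surviving houses; connect each agent $i$ to its most preferred houses in $H'$, excluding any house $h$ for which some agent $j$ with $A_{j,i}=0$ also has $h$ among its most preferred houses (this filter is what encodes the strict entries of the envy matrix); and, when no $N$-saturating matching exists, use a minimal Hall violator together with a counting argument to certify that a nonempty set of houses can appear in no allocation satisfying the envy matrix. Those houses are deleted and the process repeats, so termination and correctness are immediate from the invariant that every valid allocation uses only houses in $H'$. Your rank-vector invariant is trying to emulate exactly this, but the paper's deletion step comes with a proof. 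To salvage your approach you would need to show that the componentwise-minimal feasible rank vector is unique, or extract the correct coordinate to increment from the tight constraints on the deficient set --- which is essentially the same combinatorial content as the paper's deletion lemma, just harder to state.
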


{ We prove \Cref{lem:envyMatrixAlgo} by extending the algorithm of  \citet{GSV19}. As an overview, our algorithm either (i) finds an allocation, or (ii) identifies a set of houses, such that none of these houses can appear in any allocation satisfying the envy-matrix. In the case of (ii), these houses are deleted and the procedure repeats. This terminates when either an allocation is found, or less than $n$ houses remain, and so no allocation exists. 
This is detailed in the appendix.}

	\begin{algorithm}
        \caption{Additive Approximation for {\sc Max-ProbEF}}
        \label{alg:compact-indiff-additive}
        \small
        \textbf{Input:} A compact indifference instance $I$ and constant $\epsilon$ satisfying $0 < \epsilon \leq 1$.
        
        \textbf{Output:} An allocation with maximum EF-Prob, if there is one with EF-Prob at least $\epsilon$; otherwise, returns null.
        \begin{algorithmic}[1]
        	\STATE $allocs \gets $ empty array
        	
        	\FOR{each $n \times n$ binary matrix $A$ where $A_{i,i} = 1$ for all $i \in N$ and there are at most $\frac{1}{\epsilon}$ 1s outside the main diagonal} \label{alg:for-loop}
        		\IF{$\frac{1}{\sum_{i=1}^n \sum_{j=1}^n A_{i, j}} \geq \epsilon$}
        			\STATE Create an {\sc AllocSatisfyingEnvyMatrix} instance with envy-matrix $A$ and the same weak preferences as $I$. Solve this using \Cref{lem:envyMatrixAlgo} and let the result be $\omega$.\label{line:EMA}
        			\IF{$\omega \neq \text{null}$}
        				\STATE Append $\omega$ to $allocs$
        			\ENDIF
        		\ENDIF
        	\ENDFOR
        	
        	\IF{$allocs$ is non-empty}
        			\RETURN an allocation in $allocs$ with maximum EF-Prob
        	\ELSE
        			\RETURN no solution
        	\ENDIF
        \end{algorithmic}
    \end{algorithm}	

\begin{proof}[Proof of \Cref{thm:compact-indiff-additive}]

	We introduce \Cref{alg:compact-indiff-additive} and prove its correctness.
	Consider an instance $I$. Let $\omega$ be an allocation with maximum EF-probability in $I$, and let this EF-probability be $p$.
	We first consider the case where $p \geq \epsilon$.
	Let $A$ be the envy-matrix of $\omega$.
	From \Cref{lem:compact indifference envy-matrix}, we know that $\prod_{i=1}^{n} \frac{1}{\sum_{j=1}^n A_{i, j}} = p \geq \epsilon$ and $A$ has at most $\frac{1}{\epsilon}$ 1s (excluding the main diagonal). 
	Hence, line~\ref{line:EMA} will be run with this envy-matrix and so an allocation $\omega$ will be found with EF-probability $p$. It follows that \Cref{alg:compact-indiff-additive} finds an allocation with maximum EF-probability in this case.
	
	Now, assume that $p < \epsilon$.
	Then, every instance of {\sc AllocSatisfyingEnvyMatrix} will report no solution, and so \Cref{alg:compact-indiff-additive} will correctly determine that no allocation exists with EF-probability at least $\epsilon$.
	
	We now analyse the time complexity.
	The for loop on line~\ref{alg:for-loop} iterates over $\sum_{i = 0}^{\left \lfloor \frac{1}{\epsilon} \right \rfloor}{n^2 \choose i} = O(n^{\frac{2}{\epsilon}})$ matrices, and this can be done with only polynomial overhead.
	Since the algorithm of \Cref{lem:envyMatrixAlgo} runs in polynomial time, the overall running time is $O(n^{\frac{2}{\epsilon}} \times poly(n, m))$.
\end{proof}

Next, we remark that  {\sc ExistsCertainlyEF} can be solved in polynomial time for the compact indifference model, which is implied by \Cref{thm:compact-indiff-additive} (with $\epsilon=1$). 
\begin{corollary}
 {\sc ExistsCertainlyEF} can be solved in polynomial time for the compact indifference model.
\end{corollary}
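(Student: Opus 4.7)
The plan is to derive the corollary directly from Theorem~\ref{thm:compact-indiff-additive} by observing what the algorithm does at the boundary value $\epsilon = 1$. An allocation is certainly EF under the compact indifference model exactly when, for every pair $i \neq j$, we have $\omega(i) \succ_i \omega(j)$ in the strict sense with respect to the underlying weak order (otherwise a tie-breaking extension would induce envy, giving EF-probability strictly less than $1$). Equivalently, a certainly EF allocation is precisely one whose envy-matrix equals the identity $I_n$, in which case its EF-probability is $\prod_{i=1}^n \frac{1}{1} = 1$. Hence {\sc ExistsCertainlyEF} asks exactly whether $\mathsf{OPT} = 1$.

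I would instantiate Algorithm~\ref{alg:compact-indiff-additive} with $\epsilon = 1$. By Theorem~\ref{thm:compact-indiff-additive}, the algorithm terminates in polynomial time (since $\epsilon$ is a fixed constant, the exponent $\tfrac{2}{\epsilon} = 2$ is constant) and either computes $\mathsf{OPT}$ exactly or certifies $\mathsf{OPT} < 1$. This is the whole dichotomy we need: in the first case, we answer yes iff the returned allocation has EF-probability $1$; in the second case we answer no. Because one of these two outcomes is guaranteed, the answer to {\sc ExistsCertainlyEF} is recovered in polynomial time.

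As a sanity check on the loop bound, with $\epsilon = 1$ the outer for-loop of Algorithm~\ref{alg:compact-indiff-additive} only iterates over binary envy-matrices with at most $\lfloor 1/\epsilon \rfloor = 1$ off-diagonal ones, and among those only the identity matrix is consistent with EF-probability $1$; for that single matrix the subroutine in Lemma~\ref{lem:envyMatrixAlgo} tests whether there is an allocation $\omega$ with $\omega(i) \succ_i \omega(j)$ for every $i \neq j$, which is exactly the {\sc ExistsCertainlyEF} question. So the entire corollary collapses to a single call to {\sc AllocSatisfyingEnvyMatrix} with $A = I_n$.

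I do not anticipate a significant technical obstacle here: the main work was already done in establishing Lemma~\ref{lem:envyMatrixAlgo} and Theorem~\ref{thm:compact-indiff-additive}. The only thing to be careful about in the write-up is the clean equivalence ``certainly EF $\Leftrightarrow$ envy-matrix is $I_n$ $\Leftrightarrow$ EF-probability equals $1$,'' and the observation that the polynomial running time of the algorithm is preserved when $\epsilon$ is set to the constant $1$.
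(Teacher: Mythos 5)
Your proposal is correct and follows exactly the paper's route: the paper derives this corollary by invoking Theorem~\ref{thm:compact-indiff-additive} with $\epsilon = 1$, which is precisely what you do. Your additional observations (certainly EF $\Leftrightarrow$ the envy-matrix is the identity $\Leftrightarrow$ EF-probability equals $1$, so the whole thing reduces to one call to {\sc AllocSatisfyingEnvyMatrix} with $A = I_n$) are accurate elaborations of the same argument rather than a different proof.
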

Additionally,  {\sc ExistsPossiblyEF} can be solved in polynomial time using the algorithm of \citet{GSV19}. The proof is in the appendix.
\begin{proposition}
 \label{prop:ExistsPossiblyEF-compact-indiff}
 {\sc ExistsPossiblyEF} can be solved in polynomial time for the compact indifference model.
\end{proposition}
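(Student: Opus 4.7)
The plan is to reduce \textsc{ExistsPossiblyEF} under the compact indifference model to a deterministic envy-freeness condition and then invoke an algorithm already developed (either that of \citet{GSV19}, or \Cref{lem:envyMatrixAlgo}).

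First, I would establish the following characterization: an allocation $\omega$ is possibly EF if and only if $\omega(i) \succsim_i \omega(j)$ for every pair $i, j \in N$. The forward direction is immediate, since any linear extension of the weak order $\succsim_i$ preserves all strict preferences; if $\omega(j) \succ_i \omega(i)$ in the weak order then the same strict preference holds in every realization, so no linear extension could make $\omega$ envy-free for agent $i$. For the reverse direction, assuming $\omega(i) \succsim_i \omega(j)$ for all $j$, one can take any linear extension of $\succsim_i$ that places $\omega(i)$ at the top of its indifference class; every such extension has positive probability in the compact indifference model and renders $\omega$ envy-free with respect to the realized profile.

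Second, this equivalence reduces \textsc{ExistsPossiblyEF} to the purely deterministic problem of deciding whether there is an allocation $\omega$ such that no agent strictly prefers another agent's house to their own under the weak order. This is exactly the instance of \textsc{AllocSatisfyingEnvyMatrix} obtained by taking the envy-matrix $A$ to be the all-ones matrix, so \Cref{lem:envyMatrixAlgo} solves it in polynomial time. Alternatively, one can observe that the GSV19 algorithm for envy-free house allocation, which iteratively removes houses that cannot appear in any envy-free allocation, extends straightforwardly to weak preferences by replacing $\succ_i$ with $\succsim_i$ in its notion of ``envy''; applied to the weak-order instance, it finds such an $\omega$ whenever one exists.

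Neither step presents a real obstacle: the characterization is a one-line argument about tie-breaking, and the algorithmic content has been developed already. The only subtlety worth mentioning is ensuring the tie-breaking used in the reverse direction of the characterization is simultaneously valid for all agents, which holds because each agent's preference is drawn independently in the compact indifference model.
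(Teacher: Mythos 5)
Your proposal is correct and follows essentially the same route as the paper: reduce possible envy-freeness to deterministic envy-freeness under the weak order (an allocation is possibly EF iff no agent strictly prefers another's house), then apply the algorithm of \citet{GSV19} (equivalently, \Cref{lem:envyMatrixAlgo} with the all-ones envy-matrix), with the tie-breaking/independence observation matching the paper's remark that ties can be broken so each agent prioritizes her own item.
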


\section{Joint Probability Model}

We next show that even for an expansive uncertainty model such as joint probability, \textsc{ExistsCertainlyEF} is NP-complete.


\begin{theorem}\label{theorem:Joint-CertainlyEF}
For the joint probability model, \textsc{ExistsCertainlyEF} is NP-complete, even with a constant number of profiles.
\end{theorem}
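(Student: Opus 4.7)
The plan is to establish NP-completeness in two parts. Membership in NP is immediate: a valid certificate is the allocation $\omega$ itself, and verifying that $\omega$ is envy-free under each of the constantly many profiles in the support of the joint distribution takes polynomial time.

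For NP-hardness with only a constant number of profiles, I would reduce from a standard NP-hard problem, most naturally \textsc{3-SAT} (or a restricted variant such as \textsc{Monotone 3-SAT} or \textsc{X3C}). The conceptual starting point is that an allocation $\omega$ is certainly EF under profiles $\succ^1, \ldots, \succ^k$ iff for every pair $i\neq j$ and every profile index $p$ we have $\omega(i)\succ^p_i \omega(j)$; equivalently, $\omega$ must be envy-free under the intersection partial order of the $k$ linear orders. By the Dushnik--Miller theorem, partial orders of order-dimension at most $k$ are exactly those realizable as intersections of $k$ linear orders, so it suffices to produce a hardness reduction whose ``target'' poset preferences have constant dimension.

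The reduction sketch: for each variable $x_i$, introduce a \emph{variable gadget} with two houses $h_i^T, h_i^F$ and an agent $a_i$ whose preferences force $a_i$ to receive exactly one of the two in any EF allocation, encoding a truth assignment. For each clause $C_j = (\ell_{j,1} \vee \ell_{j,2} \vee \ell_{j,3})$, introduce a \emph{clause gadget} -- agents and houses whose preferences are designed so that an EF allocation (simultaneously in all profiles) extends the variable assignment iff at least one literal of $C_j$ is satisfied. As in the gadget arguments of Theorem~\ref{theorem:Lottery-CertainlyEF} and Theorem~\ref{prop:hardness-of-highestEF-under-compact-indifference}, a pair of ``enforcer'' agents $a_1, a_2$ paired with two private houses $e_1, e_2$ can be used to confine each gadget's agents to their intended houses in any EF allocation. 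The variable-selection constraints live in one profile; the clause-checking constraints live in one or two additional profiles, giving a total of $O(1)$ profiles independent of the 3-SAT instance size.

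The main obstacle will be encoding arbitrarily many clauses using only a constant number of profiles. Naively, one could use a separate profile per clause, but that would blow up to $\Theta(m)$ profiles. The plan to overcome this is to ensure that within each of the few shared profiles the critical pairwise preferences of different clauses occupy disjoint portions of each agent's (long) preference list, separated by filler houses that cannot be allocated in any EF allocation (again enforced by the $a_1, a_2, e_1, e_2$ trick). Verifying that the intersection partial order obtained this way exactly captures satisfiability -- in particular, that no spurious EF allocations arise from the filler or from cross-gadget interactions -- is the technical core of the proof.
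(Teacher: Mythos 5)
Your NP-membership argument is fine, and your key structural observation is correct: an allocation is certainly EF under $k$ profiles iff each agent $i$ prefers $\omega(i)$ to every other allocated house in the intersection of its $k$ linear orders, so the problem is equivalent to EF allocation under poset preferences of order dimension at most $k$. However, the hardness half of your proposal is not a proof. The clause gadget --- the part that must make ``at least one literal of $C_j$ is satisfied'' equivalent to the existence of a certainly EF extension --- is never constructed; you only assert that such agents and houses can be ``designed.'' Likewise, the scheme for packing $\Theta(m)$ clause constraints into $O(1)$ profiles by interleaving disjoint segments separated by filler houses is exactly where spurious intersection relations tend to appear (two houses from different gadgets that end up comparable the same way in all $k$ profiles create unintended envy edges), and you explicitly defer verifying this. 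Since that verification is, by your own account, the technical core, the reduction as written could not be checked or even instantiated.

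There is also a much shorter route, which is the one the paper takes: its proof of Theorem~\ref{theorem:Lottery-CertainlyEF} already establishes that \textsc{ExistsCertainlyEF} is NP-complete for the lottery model even when every agent has exactly six preference lists in its support. Given such an instance, form six joint profiles, the $t$-th being $(\succ_{1,t},\ldots,\succ_{n,t})$, each with probability $\tfrac{1}{6}$. Certain envy-freeness depends only on which linear orders appear in each agent's support, not on how they are correlated across agents, so agent $i$ certainly does not envy $j$ in the joint instance iff $\omega(i)\succ_{i,t}\omega(j)$ for all $t\in[6]$, which is precisely the condition in the lottery instance. This gives NP-hardness with six profiles with no new gadgetry. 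If you want to salvage your approach, the cleanest fix is to recognize that the constant-support lottery hardness is the missing lemma your reduction is implicitly trying to reprove, and to invoke it rather than rebuild a SAT reduction from scratch.
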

\begin{proof}
\textsc{ExistsCertainlyEF} is in NP because it can be checked in polynomial time whether a given allocation is certainly EF or not.

The NP-hardness is proved via a reduction from {\sc{ExistsCertainlyEF}} under a restricted case of lottery model
where each agent has at most six linear preferences.
One can verify that the proof of Theorem \ref{theorem:Lottery-CertainlyEF} shows that {\sc{ExistsCertainlyEF}}
is NP-complete even in the restricted case of lottery model
where each agent has six preferences.

Consider an instance $I$ of the restricted lottery model, in which there are $n$ agents and $m$ houses, and moreover each agent has six linear preferences.
Denote by $P_i = \{ \succ_{i,1}, \ldots, \succ_{i,6} \}$
the set of ordinal preferences for agent $i$ under $I$. 
We now construct an instance $I'$ of the joint probability model.
Instance $I'$ has the same number $n$ of agents and $m$ of houses. 
There are six preference profiles, each with equal probability $\frac{1}{6}$. 
For $t\in [6]$, the $t$-th preference profile is $(\succ_{1, t}, \succ_{2, t}, \ldots, \succ_{n, t})$.

We now prove that a certainly EF assignment in the restricted lottery instance $I$ corresponds to a certainly EF result in the joint probability instance $I'$.
Consider any pair of agents $i$ and $j$.
Assume that agent $i$ is assigned house $h_i$ and agent $j$ is assigned house $h_j$. Then, because the allocation is envy free under $I$, we know that $h_i \succ_{i, t} h_j$ for all $t \in [6]$. 
Accordingly, in every preference profile of $I^{\prime}$,
agent $ i $ does not envy agent $ j $,
and
thus, agent $i$ certainly does not envy agent $j$ in the instance $I'$.

We now prove the other direction.
Consider any pair of agents $i$ and $j$.
Assume that agent $i$ is assigned house $h_i$ and agent $j$ is assigned house $h_j$ in some certainly EF assignment of joint probability instance $I^{\prime}$. 
Then, because the assignment is certainly EF, we know that
for every profile, agent $ i $ does not envy agent $j$,
i.e.,
$h_i \succ_{i, t} h_j$ for all $t \in [6]$. 
Hence, agent $i$ certainly does not envy agent $j$ in the restricted lottery instance $I$.

\end{proof}

Recall that the probability of an assignment $\omega$ being EF
is the summation of probability of preference profile $P_i$, under which $\omega$ is EF.
The reduction in the proof of Theorem \ref{theorem:Joint-CertainlyEF}
indeed implies the following;
(i) there exists an assignment with EF-probability one in the restricted lottery model if and only if
there exists an assignment with EF-probability one in the joint probability model;
(ii)
there does not exist an assignment with EF-probability one in the restricted lottery model if and only if
there does not exist an assignment with EF-probability strictly greater than $\frac{5}{6}$
in the joint probability model.
Then we have the following theorem.

\begin{theorem} For any constant $\epsilon>0$,
	there is no polynomial-time algorithm with a $(\frac{6}{5} - \epsilon)$-approximation ratio for
	{\sc Max-ProbEF} under the joint probability model, assuming P$\neq$NP.
\end{theorem}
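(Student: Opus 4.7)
The plan is to upgrade the reduction from Theorem~\ref{theorem:Joint-CertainlyEF} into a gap-producing reduction and then apply standard threshold reasoning. First I would observe that the proof of Theorem~\ref{theorem:Lottery-CertainlyEF} constructs lottery instances in which every agent has exactly six preferences, so \textsc{ExistsCertainlyEF} remains NP-hard under the restricted lottery model where each agent has at most six preferences; call this restricted problem $\Pi$. Given an instance $I$ of $\Pi$, I apply the same reduction as in Theorem~\ref{theorem:Joint-CertainlyEF} to produce a joint-probability instance $I'$ whose support consists of exactly six preference profiles, each carrying weight $\tfrac{1}{6}$.

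Next I would establish the promise gap. If $I$ is a YES-instance of $\Pi$, there is an allocation that is EF under every realization of the lottery, and in particular under each of the six profiles of $I'$; hence $\mathsf{OPT}(I')=1$. If $I$ is a NO-instance, then by the equivalence already proved inside Theorem~\ref{theorem:Joint-CertainlyEF} (``certainly EF in $I'$'' iff ``certainly EF in $I$''), no allocation in $I'$ is certainly EF. Every allocation of $I'$ therefore fails to be EF under at least one of the six equally-weighted profiles, so its EF-probability is at most $\tfrac{5}{6}$, yielding $\mathsf{OPT}(I')\leq \tfrac{5}{6}$.

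Finally I would derive the inapproximability. Suppose for contradiction that for some constant $\epsilon>0$ there exists a polynomial-time $(\tfrac{6}{5}-\epsilon)$-approximation algorithm $\mathcal{A}$ for \textsc{Max-ProbEF} under the joint probability model. On input $I'$, $\mathcal{A}$ returns an allocation whose EF-probability is at least $\mathsf{OPT}(I')/(\tfrac{6}{5}-\epsilon)$, which in the YES case is at least $1/(\tfrac{6}{5}-\epsilon) > \tfrac{5}{6}$, whereas in the NO case it is bounded above by $\mathsf{OPT}(I')\leq \tfrac{5}{6}$. By Proposition~\ref{prop:efprob} the EF-probability of the returned allocation can be computed in polynomial time, so thresholding its value against $\tfrac{5}{6}$ decides $\Pi$ in polynomial time, contradicting $\mathrm{P}\neq\mathrm{NP}$. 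The only nontrivial step is the gap analysis in the second paragraph, and even there the work has essentially been done in Theorem~\ref{theorem:Joint-CertainlyEF}; the rest is the standard ``approximation versus gap'' calculation together with the observation that the hardness construction of Theorem~\ref{theorem:Lottery-CertainlyEF} uses only six preferences per agent.
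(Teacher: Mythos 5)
Your proposal is correct and follows essentially the same route as the paper: the paper derives the theorem from the observation (stated just before it) that the reduction in Theorem~\ref{theorem:Joint-CertainlyEF} yields a gap between EF-probability $1$ (YES instances) and at most $\frac{5}{6}$ (NO instances), and your gap analysis plus the standard threshold argument is exactly that reasoning, spelled out in slightly more detail.
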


In contrast, {\sc ExistsPossiblyEF} can be solved in polynomial time by running the algorithm of \citet{GSV19} on each realizable preference profile.

\begin{proposition}
\label{prop:ExistsPossiblyEF-joint}
{\sc ExistsPossiblyEF} can be solved in polynomial time for the joint probability model. 
\end{proposition}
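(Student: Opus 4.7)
The plan is to exploit the fact that in the joint probability model the input explicitly lists the support of $\Delta(\mathcal{R}(H)^n)$, so there are only polynomially many realizable preference profiles. For each such profile, existence of an envy-free allocation under deterministic preferences is decidable in polynomial time via the algorithm of \citet{GSV19}. I would simply iterate through every profile in the support and call the \citet{GSV19} algorithm; the answer to {\sc ExistsPossiblyEF} is ``yes'' if and only if at least one call reports the existence of an EF allocation.

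The correctness argument is a two-line equivalence. If some profile $P$ in the support admits an EF allocation $\omega$, then $\omega$ is EF under $P$, so the probability that $\omega$ is EF is at least $\Pr[P] > 0$, meaning $\omega$ witnesses a ``yes'' instance. Conversely, if some allocation $\omega$ is EF with positive probability, then the event ``$\omega$ is EF'' is a union (over the support) of profiles under which $\omega$ is EF, and at least one of these must have positive probability; any such profile thus admits an EF allocation (namely $\omega$), so the corresponding call to \citet{GSV19} will return ``yes''.

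The running time is immediate: since the joint probability distribution is specified explicitly, the size of the input is at least the support size, so iterating over the support and invoking a polynomial-time subroutine on each profile is polynomial in the overall input size. There is really no obstacle here, beyond noting that ``realizable'' means ``in the explicitly given support'', so the enumeration is over a number of profiles bounded by the input length rather than over the full $|\mathcal{R}(H)|^n$ space of profiles.
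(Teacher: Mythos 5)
Your proposal is correct and matches the paper's approach exactly: the paper likewise solves {\sc ExistsPossiblyEF} for the joint probability model by running the algorithm of \citet{GSV19} on each realizable preference profile in the explicitly given support. Your correctness equivalence and the observation that the support size is bounded by the input length are exactly the (implicit) justification the paper relies on.
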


\section{Conclusion}

In this paper, we study a fundamental problem of envy-free house allocation under uncertainty. For each of the uncertain preference models considered, we provide a complete set of complexity results. We find that, surprisingly, each model gives rise to a different set of complexity results.

Although we mainly focus on uncertainty models that assume underlying linear preferences, we also study uncertainty models that go beyond this assumption. For example, in the pairwise probability model, where each agent has pairwise independent preferences over items, we show (in the appendix) that all the problems we consider are NP-hard.

One of our central results is a polynomial-time algorithm that either solves {\sc Max-ProbEF} exactly or certifies that every allocation has a small probability of being EF in the compact indifference model. It is interesting to see if similar results can be achieved for other uncertain preference models. Another possible future direction is to find the best multiplicative approximation ratio achievable for the joint probability model. Finally, it is intriguing to investigate similar problems for general resource allocation settings. 

\section*{Acknowledgement}
This work was supported by NSF-CSIRO grant on ``Fair Sequential Collective Decision-Making". Bo Li is funded by NSFC under Grant No. 62102333 and HKSAR RGC under Grant No. PolyU 15224823. Mashbat Suzuki is partially supported by the ARC Laureate Project FL200100204 on ``Trustworthy AI". 

\bibliographystyle{aaai24}
\bibliography{abb,haris_masterEF,aziz_personalEF}

\appendix

\section{Missing Proofs}

\subsection*{Proof of Proposition~\ref{prop:efprob}}

\begin{proof}[Proof of Proposition~\ref{prop:efprob}]
We deal case by case. 
\begin{enumerate}
\item We initalize $q$ to 1.
For every preference profile $P_i$ in the support, we  check whether the given allocation $A$ is envy-free or not. If there is some agent who envious in $A$ according to preferences in $P_i$ we subtract the probability $p_i$ of profile $P_i$ from $q$. We do with all profiles in the support and then return the final value of $q$.
\item For each agent $i\in N$ we find the probability that the agent is not envious. This is done as follows. For agent $i$ we check each of its linear orders in its support and check whether $i$ is not envious of any agent. We add the probabilities of all such linear orders for agent $i$ to get the probability $q_i$ that $i$ is not envious. The probability that $A$ is envy-free is equal the probability that all agents are envy-free which is computed as $\prod_{i\in N}q_i$.
\item For each agent $i\in N$ we find the probability that the agent is not envious. This is done as follows. For agent $i$, we compute the probability $q_ij$ that is not envious of a given agent $j$. 
The probability $q_i$ that agent $i$ is not envious of any agent is equal to $\prod_{j\in N\setminus \{i\}}q_{ij}$. The probability that $A$ is envy-free is equal the probability that all agents are envy-free which is computed as $\prod_{i\in N}q_i$.
\item For each agent $i\in N$ we find the probability that the agent is not envious. This is done as follows. 
For agent $i$, let $S = \{ j \in N: A(j) \sim_i A(i) \}$. Agent $i$ will be envy-free if and only if the resulting complete linear ordering has house $A(i)$ preferred over house $A(j)$ for all $j \in S \setminus \{i \}$. Hence, agent $i$ is not envious with probability $\frac{1}{|S|}$.
To solve this in linear time, we first mark all the houses that have been allocated.
Then, for each agent $i$, we simply iterate through their weak preference order, and count the number of allocated houses that are preferred equally to house $A(i)$. 
\end{enumerate}
\end{proof}

\subsection*{Proof of \Cref{lem:ExistsPartialEF}}

\begin{proof}[Proof of \Cref{lem:ExistsPartialEF}]
        Our reduction is a modification of the reduction of \citet{KMW21} (Theorem 3.5). They prove hardness for a similar problem, where all $n$ agents must be allocated houses with the requirement that at least $k$ agents are envy-free. Most of the construction and proof is the same as \citet{KMW21}, with some changes to adapt to the different setting, such as an adjustment to the set $H$ of houses in the constructed instance.
        
        We reduce from the decision version of {\sc{Minimum Coverage}}. In this problem, there is a finite set $E$ of elements, subsets $S_1, S_2, \ldots, S_d$ of $E$, and positive integers $q, \ell$ with $q \leq |E|$ and $l \leq d$. 
        The goal is to determine whether there exists a subset $\ell \subseteq [d]$ with $|I| = \ell$ and $|\bigcup_{t \in I}S_t| \leq q$.
        
        Given an instance of {\sc{Minimum Coverage}}, we construct an instance of  {\sc{ExistsPartialEF}} with binary preferences as follows:
        \begin{itemize}
        		\item $N = \{ a_e : e \in E \} \cup \{ a_t^* : t \in [d] \}$.
		\item $H = \{ h_t^* : t \in [d]\} \cup \{h_t : t \in [|E| - q + d - \ell \}$.
		\item For each agent $i \in N$, recall that their binary preferences can be expressed in terms of two sets $A_i$ and $B_i$, such that houses in $A_i$ are preferred over houses in $B_i$. For each agent, we define $A_i$, and the set $B_i$ will be $H \setminus A_i$:
		\begin{itemize}
			\item For $e \in E$, $A_{a_e} = \{ h^*_t : e \in S_t \}$.
			\item For $t \in [d]$, $A_{a^*_t} = \{ h^*_t \}$.
		\end{itemize} 
		\item $k = |E| - q + d$.
        \end{itemize}
        
        Assume that there is a solution $I \subseteq [d]$ (with $|I| = \ell$) to the {\sc{Minimum Coverage}} instance. We show that there is a valid allocation $\omega$ for our {\sc{ExistsPartialEF}} instance. 
        For notational convenience, we use $\omega(i) = \emptyset$ to denote that agent $i$ was not allocated a house.
        \begin{itemize}
        		\item For all $t \in I$, $\omega(a_t^*) = h_t^*$.
		\item Let $E' = E \setminus \bigcup_{t \in I}S_t$. Note that $|E'| \geq |E|-q$, since $I$ is a solution to the {\sc{Minimum Coverage}} instance. Let $E'' \subseteq E'$ be an arbitrary subset with $|E''| = |E|-q$. 
		Then, we allocate the houses in $\{h_t : t \in [|E| - q + d - \ell \}$ to the agents in $\{ a_e : e \in E'' \} \cup \{ a^*_t : t \in [d] \setminus I \}$ in an arbitrary way.
		\item For every remaining agent $i$, $\omega(i) = \emptyset$.
        \end{itemize}
        First, note that exactly $|I| + (|E|-q + d - \ell) = |E| - q + d = k$ agents were allocated houses. We now show that that these agents are envy-free:
        \begin{itemize}
        		\item Consider some $t \in I$. Since $\omega(a_t^*) \in A_{a_t^*}$, we know that $a_t^*$ is not envious.
		\item Consider some $e \in E''$. By definition of $E''$, we know that all houses in $A_{a_e}$ are unallocated, and so $a_e$ is not envious.
		\item Consider some $t \in [d] \setminus I$. Since house $h^*_t$ is unallocated, agent $a_t^*$ is not envious.
        \end{itemize}
        This completes the proof of the first direction.
        
        Now, assume that there is an allocation $\omega$ for the {\sc{ExistsPartialEF}} instance. In particular, exactly $k$ agents are allocated houses and these agents are envy-free. We will show that there is a solution to the {\sc{Minimum Coverage}} instance.
        
        \begin{claim}\label{claim:ExistsPartialEF-fixalloc}
        		There exists an allocation $\omega'$ satisfying the following conditions:
		\begin{enumerate}
			\item Exactly $k$ agents are allocated houses in $\omega'$ and these agents are envy-free.
			\item For each $t \in [d]$, if $h_t^*$ is allocated then $\omega'(a_t^*) =  h_t^*$.
		\end{enumerate}
        \end{claim}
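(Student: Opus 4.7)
The plan is to construct $\omega'$ from $\omega$ by a sequence of local swaps, each of which preserves property~(1) while making progress toward property~(2). The core structural observation I would establish first is the following: if an envy-free partial allocation assigning exactly $k$ agents gives house $h_t^*$ to some agent $a_e$ with $e \in E$ (rather than to $a_t^*$), then agent $a_t^*$ must be unallocated in that allocation. Indeed, $e$ must lie in $S_t$ since $a_e$ is envy-free and $h_t^* \in A_{a_e}$ by the premise; and if $a_t^*$ were also allocated, it would receive some house $h' \neq h_t^*$, which lies outside $A_{a_t^*} = \{h_t^*\}$, so $a_t^*$ would envy $a_e$---a contradiction.

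Given this observation, I would define a swap operation that takes $h_t^*$ away from $a_e$, leaves $a_e$ unallocated, and gives $h_t^*$ to $a_t^*$. Verifying that this swap preserves property~(1) is straightforward: the count of allocated agents is unchanged (one agent leaves the set of allocated agents, another enters); $a_t^*$ now receives its unique preferred house and is envy-free; $a_e$ is unallocated and so vacuously not envious; and the envy status of every other allocated agent is unchanged because the \emph{set} of allocated houses is identical before and after the swap---only the identity of the recipient of $h_t^*$ changes, while envy depends solely on one's own house and the multiset of houses assigned to others.

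Finally, I would initialize $\omega' \leftarrow \omega$ and repeatedly apply this swap while some index $t$ violates property~(2). Termination follows because each swap strictly decreases by one the number of violating indices and cannot create new violations, since only the assignment of $h_t^*$ is modified and this modification brings $t$ into compliance. The process therefore halts after at most $d$ swaps, yielding an $\omega'$ satisfying both conditions.

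The main subtlety I anticipate is articulating the correct invariant across swaps: specifically, that the \textbf{set of allocated houses}, not merely the number of allocated agents, is preserved. This is precisely what guarantees that none of the agents untouched by the swap suddenly becomes envious. Once this invariant is stated crisply, verifying each of the three swap cases (the new recipient $a_t^*$, the newly unallocated $a_e$, and all other agents) is immediate.
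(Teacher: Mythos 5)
Your proof is correct and follows essentially the same route as the paper's: identify a violating index $t$, observe that $a_t^*$ must then be unallocated (otherwise it would envy the holder of $h_t^*$), reassign $h_t^*$ to $a_t^*$ while unallocating the previous holder, and iterate until no violations remain. The only cosmetic difference is that you phrase the swap as taking $h_t^*$ from an agent $a_e$ with $e \in E$, whereas the current holder could also be some $a_{t'}^*$ with $t' \neq t$; this changes nothing, since neither the swap nor its analysis uses the holder's identity, and your side remark that $e \in S_t$ is not needed anywhere.
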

        \begin{proof}[Proof of \Cref{claim:ExistsPartialEF-fixalloc}]
        		Note that $\omega$ satisfies the first condition, and so assume that $\omega$ does not satisfy the second condition.
		Then, let $t \in [d]$ be a value for which it is not satisfied: that is, $h_t^*$ is allocated but $\omega(a_t^*) \neq h_t^*$.
		Then, $\omega(a_t^*) = \emptyset$, or otherwise agent $a_t^*$ would envy whomever is allocated $h_t^*$.
		
		Let $i \in N$ be the agent such that $\omega(i) = h_t^*$.
		We define new allocation $\omega'$, which is the same as $\omega$ except that $\omega'(a_t^*) = h_t^*$ and $\omega'(i) = \emptyset$.
		Note that $\omega'$ still satisfies the first condition, since agent $a_t^*$ is envy-free, agent $i$ is not allocated a house and every other agent is unchanged from $\omega$. 
		
		By repetitively applying this process, we obtain an allocation satisfying the second condition.
        \end{proof}
        We use \Cref{claim:ExistsPartialEF-fixalloc} to obtain an allocation $\omega'$ satisfying the two conditions.
        Then, let $I = \{ t \in [d] : h^*_t \text{ is allocated in } \omega' \}$.
        Note that $|I| \geq \ell$: otherwise, if $|I| < \ell$, then the number of allocated houses in $\omega'$ would be less than $\ell + (|E|-q+d-\ell) = k$.
        We will prove that $I$ is a valid solution to the {\sc{Minimum Coverage}} instance.

        	Let $E' = \{ e \in E : \omega'(a_e) \neq \emptyset \}$. 
	Since $k$ agents were allocated houses in $\omega'$, it follows that $|E'| \geq k-d = |E|-q$. 
	Now, consider some $e \in E'$. 
	From \Cref{claim:ExistsPartialEF-fixalloc}, we know that $\omega'(a_e) \in \{h_t : t \in [|E| - q + d - \ell \}$ and so $\omega'(a_e) \in B_{a_e}$. 
	Hence, we know that $e \not\in \bigcup_{t \in I}S_t$: 
 otherwise, agent $a_e$ would envy at least one agent from $\{ a^*_t : t \in I \}$. 
	
	Hence, $E' \subseteq (E \setminus \bigcup_{t \in I}S_t)$ and so $|E \setminus \bigcup_{t \in I}S_t| \geq |E|-q$.
	Thus, $|\bigcup_{t \in I}S_t| \leq q$, completing the proof.
\end{proof}

\subsection*{Proof of Lemma~\ref{lem:ExistsPartialPossiblyEF}}

\begin{proof}[Proof of Lemma~\ref{lem:ExistsPartialPossiblyEF}]
	We reduce from {\sc{ExistsPartialEF}} under binary preferences, which is NP-Hard from Lemma~\ref{lem:ExistsPartialEF}.
	Consider an instance $I$ of {\sc{ExistsPartialEF}} with $n$ agents and $m$ houses $\{h_1, \ldots, h_m\}$ and parameter $k$.
	
	We construct an instance $I'$ of {\sc{ExistsPartialPossiblyEF}} under the lottery model with $n$ agents, $m$ houses, and parameter $k$. 
	We now construct the preference lists for the agents. 
	Fix agent $i$. Suppose that in $I$ agent $i$ has partitioned the houses into two sets $A_i, B_i$ with houses in the same set valued equally and $h_a \succ h_b$ for all $h_a \in A_i$ and $h_b \in B_i$. 
	Further assume that $A_i = \{a_1, \ldots, a_{|A_i|}\}$ and $B_i = \{b_1, \ldots, b_{|B_i|}\}$.
	Then, in $I'$, agent $i$ has $|A_i| + |B_i|$ preference lists as follows:
	\begin{itemize}
		\item For each $j \in [|A_i|]$, there is a preference list $h_{a_j} \succ_j h_{a_1} \succ_j \ldots \succ_j h_{a_{j-1}} \succ_j h_{a_{j+1}} \succ_j \ldots \succ_j h_{a_{|A_i|}} \succ_j h_{b_1} \succ_j \ldots \succ_j h_{b_{|B_i|}}$. Note that the $h_{a_1} \succ_j \ldots \succ_j h_{a_{j-1}}$ segment is empty if $j = 1$, and the $h_{a_{j+1}} \succ_j \ldots \succ_j h_{a_{|A|}}$ segment is empty if $j = |A|$.
		\item For each $j \in [|B_i|]$, there is a preference list $h_{a_1} \succ_j \ldots \succ_j h_{a_{|A_i|}} \succ_j h_{b_j} \succ_j h_{b_1} \succ_j \ldots \succ_j h_{b_{j-1}} \succ_j h_{b_{j+1}} \succ_j \ldots \succ_j h_{b_{|B_i|}}$. Note that the $h_{b_1} \succ_j \ldots \succ_j h_{b_{j-1}}$ segment is empty if $j = 1$, and the $h_{b_{j+1}} \succ_j \ldots \succ_j h_{b_{|B_i|}}$ segment is empty if $j = |B_i|$.
	\end{itemize}
	
	This completes the description of the reduction. For correctness, consider some agent $i$. In every preference list in $I'$, all houses in the set $A_i$ are preferred over all houses in $B_i$. 
	Additionally, for each house $a \in A_i$ (resp $b \in B_i$) there exists a preference order where house $a$ (resp. $b$) is preferred over all other houses in $A_i$ (resp. $B_i$). Hence, an agent being envy-free in the instance $I$ is equivalent to an agent being possibly envy-free in the instance $I'$.
	\end{proof}

\subsection*{Proof of \Cref{prop:hardness-of-highestEF-under-compact-indifference}}

\begin{proof}[Proof of \Cref{prop:hardness-of-highestEF-under-compact-indifference}]
In the paper, we provide a reduction from an instance $I$ of {\sc Independent Set} to an instance $I'$ of  {\sc Max-ProbEF} under the compact indifference model.
Here, we show that this reduction is polynomial in size and that the answer to the instance $I$ is the same as the answer to the instance $I'$.

We first show that the instance $I'$ is polynomial in size. 
	The reduction creates the following agents and houses:
	\begin{itemize}
		\item Two agents and two houses are introduced so that we can provide partial preference lists.
		\item For each vertex $v \in V$, we introduce $1$ agent and $2$ houses. Additionally, we introduce $\alpha$ copies of the single penalty agent, each of which has $2$ agents and $4$ houses. Hence, each vertex contributes $2\alpha + 1$ agents and $4\alpha + 2$ houses.
		\item For each edge $uv \in E$, we introduce $3|V|\alpha$ double penalty gadgets, each of which has $4$ agents and $8$ houses. Hence, each edge contributes $12|V|\alpha$ agents and $24|V|\alpha$ houses.
	\end{itemize}
	Overall, there are $n =2 + |V| + 2|V|\alpha + 12|V||E|\alpha$ agents and $m = 2 + 2|V| + 4|V|\alpha + 24|V||E|\alpha$ houses.
	Since $\alpha$ is polynomial in $|V|$ and $|E|$, the instance $I'$ is polynomial in size.
	
	\begin{claim}\label{claim:possibly-EF}
		Consider some possibly-EF allocation in the instance $I'$, where houses inside of single and double penalty gadgets are allocated in such a way that maximises the probability of envy-freeness. Then, this allocation satisfies the following conditions:
		\begin{enumerate}
			\item For each vertex $v \in V$, exactly one of houses the $t_v$ and $f_v$ is allocated.
			\item Let $\ell = |\{ v \in V : t_v \text{ is allocated} \}|$, and let $o = |\{ uv \in E : \text{both houses $t_u$ and $t_v$ are allocated} \}|$. Then, the probability that the allocation is EF is $\frac{1}{256^{2|E||V|\alpha  + o|V|\alpha}4^{(|V|-\ell)\alpha}}$.
		\end{enumerate}
	\end{claim}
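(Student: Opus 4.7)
The plan is to verify the two conditions of the claim separately, leveraging the per-gadget analyses already established in the body of the proof.

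First, I would establish condition~1 by showing that in any possibly-EF allocation, the only agent that can ever receive $t_v$ or $f_v$ is agent $a_v$, and that $a_v$ must receive one of them. The body of the proof already shows that, within any single penalty gadget, the house $f_v$ cannot be allocated to either gadget agent $a_1$ or $a_2$ without producing envy, and within any double penalty gadget the houses $h_1, h_2$ cannot be allocated to any of the four gadget agents $a_1, \ldots, a_4$. Since the only places where $t_v$ and $f_v$ appear outside $a_v$'s preference list are in these gadgets, neither house can be given to any gadget agent. Because $a_v$'s preference list over $H$ consists only of the tied pair $t_v \sim_{a_v} f_v$, and all other houses are strictly worse than $e_1, e_2$ (which are forbidden by the two-agent partial-preference trick), agent $a_v$ must be allocated one of $\{t_v, f_v\}$. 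Hence exactly one of the two houses is allocated.

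Second, I would compute the EF-probability of the allocation by multiplying the contributions of each gadget. Using the body's analysis, each of the $\alpha$ single penalty gadgets for $f_v$ contributes a factor of $\frac{1}{4}$ if $f_v$ is allocated and $1$ otherwise, while each of the $|V|\alpha$ copies of a double penalty gadget on a pair $(h_1, h_2)$ contributes $\frac{1}{256}$ if at least one of $h_1, h_2$ is allocated and $1$ otherwise. With $|V|-\ell$ vertices having $f_v$ allocated, the single-gadget contribution totals $\frac{1}{4^{(|V|-\ell)\alpha}}$. For the double penalty gadgets, a short case check on the four possibilities of $(t_u, t_v)$ allocation shows that all three pair-types trigger exactly when both $t_u, t_v$ are allocated (contributing $3|V|\alpha$ active gadgets per such edge), and exactly two pair-types trigger otherwise (contributing $2|V|\alpha$ active gadgets per edge). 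Combining, the double-gadget contribution is $256^{-3o|V|\alpha \,-\, 2(|E|-o)|V|\alpha} = 256^{-2|E||V|\alpha \,-\, o|V|\alpha}$. The product then matches the claimed formula.

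The main obstacle will be confirming that the agents $a_v$ themselves contribute a factor of exactly $1$ to the EF-probability, despite being indifferent between $t_v$ and $f_v$. This is precisely where condition~1 matters: because exactly one of $t_v, f_v$ is allocated, the tied-houses set $\{j : \omega(j) \sim_{a_v} \omega(a_v)\}$ for $a_v$ contains only $a_v$ itself, so $a_v$ is certainly EF. A similar routine check shows that idle gadget agents (those whose gadgets are inactive) are allocated strictly-preferred outside houses such as $e_3, e_4$ or $e_5, \ldots, e_8$ and so contribute factor $1$ as well. Once these are verified, the gadget factor decomposition in the second step gives the stated product immediately.
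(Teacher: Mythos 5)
Your proposal is correct and follows essentially the same route as the paper's own proof: condition~1 via the observation that only $a_v$ can receive $t_v$ or $f_v$ and must receive one of them, and condition~2 by multiplying the per-gadget penalties with the same counting ($3o|V|\alpha + 2(|E|-o)|V|\alpha = 2|E||V|\alpha + o|V|\alpha$ active double gadgets and $(|V|-\ell)\alpha$ active single gadgets). Your explicit check that the vertex agents $a_v$ and the idle gadget agents each contribute a factor of exactly $1$ is a welcome detail that the paper compresses into the single assertion that ``the only uncertainty of EF comes from the gadgets.''
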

	\begin{proof}[Proof of Claim~\ref{claim:possibly-EF}]
		For the first condition, note that for each $v \in V$, one of $t_v$ and $f_v$ must be allocated to agent $a_v$. Additionally, we showed that no other agent can be allocated $t_v$ or $f_v$ in any possibly-EF allocation, completing the proof of this condition.
		
		For the second condition, we begin by noting that the only uncertainly of EF comes from the gadgets. 
		The single penalty gadget introduces a penalty of $\frac{1}{4^{\alpha}}$ for each house $f_v$ that is allocated.
		Therefore, these gadgets multiply the probability of EF by a combined penalty of $\frac{1}{4^{(|V|-\ell)\alpha}}$.
		The double penalty gadgets introduce a penalty of $\frac{1}{256^{2|V|\alpha}}$ for each edge $uv \in E$, with an additional penalty of $\frac{1}{256^{|V|\alpha}}$ if both $t_u$ and $t_v$ are allocated. Hence, the combined penalty of these gadgets is $\frac{1}{256^{2|E||V|\alpha}} \times \frac{1}{256^{o|V|\alpha}}$. By independence, we can multiply the penalty of the single and double penalty gadgets to get the overall probability of envy-freeness.
	\end{proof}

	It follows from Claim~\ref{claim:possibly-EF} that any possibly-EF allocation in $I'$ has EF-probability at most $\frac{1}{256^{2|E||V|\alpha}}$.
	
	We now show how we can use the instance $I'$ to find an independent set in the instance $I$.

	\begin{claim}\label{claim:approx-factor}
	The following two statements hold:
	\begin{enumerate}
		\item There exists an independent set of size $k$ in $G$ if and only if there exists an allocation in $I'$ with EF-probability $\frac{1}{256^{2|V|\alpha}4^{(|V|-k)\alpha}}$.
		\item There does not exist an independent set of size $k$ in $G$ if and only if there does not exist an allocation in $I'$ with EF-probability strictly greater than $\frac{1}{256^{2|V|\alpha}4^{(|V|-k+1)\alpha}}$.
	\end{enumerate}
	\end{claim}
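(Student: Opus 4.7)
The plan is to derive both statements directly from Claim~\ref{claim:possibly-EF}, which already computes the EF-probability of an optimally-filled possibly-EF allocation in terms of two combinatorial parameters: $\ell$, the number of vertices $v$ with $t_v$ allocated, and $o$, the number of edges $uv$ whose endpoints both have $t_u,t_v$ allocated. So the main task is to translate the existence of an independent set of size $k$ into the conditions $\ell \geq k$ and $o=0$, and to argue that if one of these fails, the EF-probability drops by enough to fall below the threshold used in statement~(2).

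For the forward direction of statement~(1), given an independent set $U \subseteq V$ with $|U|=k$, I would assign $\omega(a_v)=t_v$ for $v \in U$ and $\omega(a_v)=f_v$ otherwise, and fill in the single- and double-penalty gadgets in the probability-maximising way described in the construction ($e_3,e_4$ or $e_1,e_2$ as appropriate for each single gadget; $e_5,\ldots,e_8$ when both target houses are unallocated and $e_1,\ldots,e_4$ otherwise for each double gadget). Because $U$ is independent, $o=0$ and $\ell=k$, so Claim~\ref{claim:possibly-EF}(2) yields exactly $\frac{1}{256^{2|V|\alpha}4^{(|V|-k)\alpha}}$. For the backward direction, any allocation achieving a positive EF-probability is possibly EF, so by Claim~\ref{claim:possibly-EF}(1) every $a_v$ receives $t_v$ or $f_v$ and the gadget houses are pinned down; then Claim~\ref{claim:possibly-EF}(2), together with the fact that any positive $o$ costs an extra factor of $256^{|V|\alpha}$ and any decrement of $\ell$ costs a factor of $4^\alpha$, forces $\ell=k$ and $o=0$ whenever the EF-probability equals the stated target. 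The corresponding set $\{v : \omega(a_v)=t_v\}$ is then an independent set of size $k$.

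Statement~(2) follows from the same analysis. Its backward direction is an immediate corollary of the forward direction of~(1): an independent set of size $k$ produces an allocation whose EF-probability exceeds the threshold by a factor of $4^\alpha$. For its forward direction, suppose no independent set of size $k$ exists and let $\omega$ be any allocation. If $\omega$ is not possibly EF then its EF-probability is $0$ and we are done; otherwise, the set $U := \{v : \omega(a_v)=t_v\}$ cannot simultaneously satisfy $|U| \geq k$ and be independent, so either $\ell \leq k-1$ or $o \geq 1$. In the first case the $4^{(|V|-\ell)\alpha}$ factor in Claim~\ref{claim:possibly-EF}(2) contributes at least $4^{(|V|-k+1)\alpha}$ and we are done. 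In the second case the additional $256^{o|V|\alpha} \geq 256^{|V|\alpha} = 4^{4|V|\alpha}$ penalty dwarfs the missing $4^\alpha$ factor, so the probability again lies below the stated threshold.

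The main obstacle I expect is not the algebra but the structural rigidity arguments that justify applying Claim~\ref{claim:possibly-EF} in the first place, namely that every possibly-EF allocation must use the canonical $(t_v,f_v)$ pattern and the canonical gadget fillings used to derive the probability formula. These ``locking'' properties were sketched in the construction (no agent other than $a_v$ can be allocated $t_v$ or $f_v$ in a possibly-EF allocation, and each penalty gadget forces a specific internal behaviour based on whether its target house is allocated), but tying them all together to ensure Claim~\ref{claim:possibly-EF}(2) really applies to \emph{every} possibly-EF $\omega$ is the step that needs careful, case-based verification.
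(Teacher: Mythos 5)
Your proposal is correct and follows essentially the same route as the paper: both directions of statement (1) are obtained by applying Claim~\ref{claim:possibly-EF} to the canonical allocation built from an independent set (giving $\ell=k$, $o=0$) and, conversely, by using the penalty factors ($4^{\alpha}$ per missing $t_v$ and $256^{|V|\alpha}$ per violated edge) to force independence and size at least $k$, with statement (2) handled by the same two-case analysis. The structural ``locking'' properties you flag as the delicate step are exactly the ones the paper also establishes in the gadget construction and then invokes, so your deferral matches the paper's organization.
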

	\begin{proof}[Proof of Claim~\ref{claim:approx-factor}]
	We begin by proving the first statement. 
	Assume that we have an independent set $S$ of $G$ with $|S| = k$.
	We create an allocation $\omega$ as follows: 
	For each $v \in V$, if $v \in V$ we set $\omega(a_v) = t_v$. Otherwise, we set $\omega(a_v) = f_v$.
	We allocate houses inside gadgets in such a way that maximises envy-freeness. 
	Then, by Claim~\ref{claim:possibly-EF} it holds that the probability of EF for $\omega$ is $\frac{1}{256^{2|V|\alpha}4^{(|V|-k)\alpha}}$. In particular, $\ell = k$ and $o = 0$, since $S$ is an independent set of $G$.
	We now prove the reverse direction. Assume that we have an allocation $\omega'$ with an EF probability of $\frac{1}{256^{2|V|\alpha}4^{(|V|-k)\alpha}}$.
	Let $S' = \{ v \in V : \text{house $t_v$ is allocated in $\omega'$} \}$. 
	We first prove by contradiction that $S'$ is an independent set of $G$. 
	As mentioned earlier, any allocation in $I'$ has EF probability at most $\frac{1}{256^{2|E||V|\alpha}}$. 
	Assume that $S'$ is not an independent set, and so there exists two vertices $u, v \in S'$ such that $uv \in E$.
	Thus, the double penalty gadgets multiply the EF probability by $\frac{1}{256^{|V|\alpha}}$, implying that the EF probability of $\omega'$ is at most $\frac{1}{256^{2|E||V|\alpha + |V|\alpha}}$. However, since $k \geq 0$, it follows that $\frac{1}{256^{2|E||V|\alpha + |V|\alpha}} < \frac{1}{256^{2|V|\alpha}4^{(|V|-k)\alpha}}$ and so by contradiction $\omega'$ must be an independent set. 
	We now prove by contradiction that $|S'| \geq k$.  
	Assume $|S'| < k$, and so $|V \setminus S'| > |V|-k$. 
	Then, there will be (at least) $(|V|-k+1)\alpha$ single penalty gadgets that multiply the EF probability by a combined $\frac{1}{4^{(|V|-k+1)\alpha}}$, and so $\omega'$ would have an EF probability of at most $\frac{1}{256^{2|V|\alpha}4^{(|V|-k+1)\alpha}}$, which is a contradiction.
	Hence $S'$ is an independent set of size at least $k$, completing the proof of the first statement.
	
	We now prove the second statement. 
	First, assume that all independent sets of $G$ have size strictly less than $k$. 
	Then, consider any possibly-EF allocation $\omega$.
	We denote 
	 $$S =  \{ v \in V : \text{house $t_v$ is allocated in $\omega$} \}$$ 
	We have two cases:
	\begin{enumerate}
		\item $S$ is not an independent set. Then, using the same ideas as in the previous paragraph, we know that the EF-probability of $\omega$ is at most $\frac{1}{256^{2|E||V|\alpha+ |V|\alpha}} < \frac{1}{256^{2|V|\alpha}4^{(|V|-k+1)\alpha}}$.
		\item $S$ is an independent set. Then, $|S| < k$ and so $|V \setminus S| > |V|-k$. Using the same ideas as in the previous paragraph we know that the EF probability $\omega$ is at most $\frac{1}{256^{2|V|\alpha}4^{(|V|-k+1)\alpha}}$.
	\end{enumerate}
	Now, for the other direction, assume that there does not exist an allocation in $I'$ with EF-probability strictly greater than $\frac{1}{256^{2|V|\alpha}4^{(|V|-k+1)\alpha}}$. Then, from the first statement we know that there does not exist an independent set of size $k$. 
	This completes the proof of the second statement.
	\end{proof}
	
	Note that  $$\frac{\frac{1}{256^{2|V|\alpha}4^{(|V|-k)\alpha}}}{\frac{1}{256^{2|V|\alpha}4^{(|V|-k+1)\alpha}}} = \frac{256^{2|V|\alpha}4^{(|V|-k+1)\alpha}}{256^{2|V|\alpha}4^{(|V|-k)\alpha}} = 4^\alpha$$
	Hence, by Claim~\ref{claim:approx-factor}, we know that, for any $\epsilon > 0$, a $(4^{\alpha}-\epsilon)$-approximation for the instance $I'$ is sufficient to solve the instance $I$ of {\sc Independent Set}.
	We will show that $f(n, m) < 4^\alpha$, and hence it will follow that $f(n, m)$-approximating {\sc Max-ProbEF} is NP-Hard.
	
	Recall that $f(n, m)$ is upper bounded by $a(n+m)^r$, that $n = 2 + |V| + 2|V|\alpha + 12|V||E|\alpha$, and that $m = 2 + 2|V| + 4|V|\alpha + 24|V||E|\alpha$.
	We assume that $|V| >0$ and $|E| > 0$ (otherwise, it is trivial to solve {\sc Independent Set}). Then, $n+m = 4+3|V|+6|V|\alpha+ 36|V||E|\alpha \leq 49|V||E|\alpha$. 
	It follows that 
	\begin{align*} 
		f(n, m) \leq a(n+m)^r \leq (a(n+m))^r \\ 
		\leq(49a|V||E|\alpha)^r = ((49ar|V||E|)^2)^r,
	\end{align*}
	 where the last transition is true because $\alpha = 49ar^2  |V| |E|$.
	However, note that for all $x \geq 1$ it holds that $x^2 < 4^x$. Hence, we know that $(49ar|V||E|)^2 < 4^{49ar|V||E|}$ and so
	$$f(n, m) < (4^{49ar|V||E|})^r = 4^{49ar^2|V||E|} = 4^{\alpha},$$
	completing the proof.
\end{proof}


\subsection*{Proof of \Cref{lem:compact indifference envy-matrix}}

\begin{proof}[Proof of \Cref{lem:compact indifference envy-matrix}]
	We begin with (i).
	Consider some agent $i$, and let $x_i = \sum_{j=1}^n A_{i, j}$.
	Specifically, $x_i$ is the number of agents $j$ for which $\omega(i) \sim_i \omega(j)$. 
	Recall that the probability that agent $i$ is EF under the allocation $\omega$ is $\frac{1}{x_i}$. 
	Since the EF-probability of each agent is independent, we can multiply these probabilities to obtain the given equation. 
	
	We now prove (ii). First, note that $A_{i, i} = 1$ for all $i \in N$, and so $x_i \geq 1$ for all $i$. 
	If the envy-matrix has no 1s outside the main diagonal, then (ii) trivially holds.
	Otherwise, assume that the envy-matrix has at least one 1 outside the main diagonal, and so $x_i \geq 2$ for some $i \in N$.
	Then, the EF-Probability can be re-written as 
	$$\prod_{i \in N : x_i \geq 2} \frac{1}{x_i}.$$
	
	Now, assume that there exists two agents $i, j \in N$ such that $x_i \geq x_j \geq 2$. 
	Then, $\frac{1}{x_ix_j} \leq \frac{1}{2x_i} \leq \frac{1}{x_i + x_j}$.
	By repetitively applying this inequality, we see that
	$$\prod_{i \in N : x_i \geq 2} \frac{1}{x_i} \leq \frac{1}{\sum_{i \in N : x_i \geq 2} x_i}.$$
	By assumption, the EF-Probability is at least $\epsilon$, and so 
	$$\epsilon \leq \frac{1}{\sum_{i \in N : x_i \geq 2} x_i},$$
	meaning that $\sum_{i \in N : x_i \geq 2} x_i \leq \frac{1}{\epsilon}$.
	It follows that $A$ has at most $\frac{1}{\epsilon}$  1s outside of the main diagonal.
\end{proof}

\subsection*{Proof of \Cref{lem:envyMatrixAlgo}}

\begin{proof}[Proof of \Cref{lem:envyMatrixAlgo}]
	We introduce \Cref{alg:envyMatrixAlgo}, which is a modification of the algorithm of \citet{GSV19}.
	\begin{algorithm}
	\caption{{\sc AllocSatisfyingEnvyMatrix}}
	\label{alg:envyMatrixAlgo}
	
	\textbf{Input:} A house allocation instance with weak deterministic ordinal preferences, and an $n \times n$ binary matrix $A$.
	
	\textbf{Output:} An allocation satisfying the envy matrix, or reports that there is no solution.
	
	\begin{algorithmic}[1]
		\STATE $H' \gets H$
		\WHILE{$|N| \leq |H'|$}
			\STATE We construct a bipartite graph $G = (N, H', E)$ where we create the edge set $E$ as follows. For each agent $i \in N$ and house $h \in H'$ we include the edge $(i, h)$ if $h$ is a most preferred house for agent $i$, and there does not exist another agent $j \neq i$ where $A_{j, i} = 0$ and $h$ is a most preferred house for agent $j$.
			
			\IF{there exists an $N$-saturating matching in $G$}
				\RETURN the corresponding allocation.
			\ELSE
				\STATE Find a minimal Hall violator $Z \subseteq N$, using Lemma 2.1 of \citet{GSV19}.
				
				\STATE $P \gets \{ h \in H' : \exists i \in Z$ such that $h$ is a most preferred house for agent $i \}$
				
				\STATE $H' \gets H' \setminus P$ \label{line:remove houses}
			\ENDIF
		\ENDWHILE
		
		\RETURN no solution
	\end{algorithmic}
\end{algorithm}
	First, we note that \Cref{alg:envyMatrixAlgo} is equivalent to the algorithm of \citet{GSV19} if the matrix $A$ entirely consists of 1s, but differs otherwise.
	
	\Cref{alg:envyMatrixAlgo} runs in polynomial time since a minimal Hall violator can be found in polynomial time \citep{GSV19} and each loop iteration either returns or decreases $|H'|$ by at least one.
	
	If \Cref{alg:envyMatrixAlgo} returns an allocation $\omega$, then each agent is allocated one of their most preferred houses from $H'$, and thus $\omega(i) \succsim_i \omega(j)$ for all agents $i, j$. 
	Additionally, if $A_{i, j} = 0$, then agent $j$ will not receive a house that is most preferred by agent $i$, and so $\omega(i) \succ \omega(j)$ in this case. 
	Thus, if \Cref{alg:envyMatrixAlgo} returns, then the returned allocation satisfies the envy-matrix.
	
	We now prove that the houses removed on line~\ref{line:remove houses} cannot be part of any allocation satisfying the envy-matrix. 
	Then, if \Cref{alg:envyMatrixAlgo} returns no solution, we know that no allocation exists which satisfies the envy-matrix.   
	We prove by induction on the number of iterations. 
	Consider some iteration when we have a house set $H'$.
	By the induction hypothesis, we know that any allocation satisfying the envy-matrix can only contain houses in $H'$.
	
	We introduce some notation for convenience: for a set of agents $X \subseteq N$, we define $N_G(X)$ to be neighbourhood of $X$ in $G$, and $P(X)$ to be the set of houses that are among the most preferred houses for at least one agent in $X$. Note that $N_G(X) \subseteq P(X)$ by the construction of $G$.
	
	Let $Z \subseteq N$ be a minimal hall violator of $G$. 
	Consider some agent $i \in Z$, and some house $h \in P(\{i\}) \setminus N_G(\{i\})$. 
	From the construction of $G$, there must exist some agent $j$ where $A_{j, i} = 0$ and $h \in P(\{j\})$. 
	Thus, agent $i$ cannot be allocated house $h$ in any assignment that satisfies the envy-matrix.
	 
	We now prove that the houses in $N_G(Z)$ cannot be in any allocation satisfying the envy-matrix. 
	If $N_G(Z)$ is empty then this is vacuously true, and so we assume $|N_G(Z)| > 0$. Since $Z$ is minimal, it follows that every agent $i \in Z$ has positive degree in $G$, since otherwise the set $\{i \} \subset Z$ would be a smaller Hall violator. We now proceed using the same proof as \citet{GSV19}.
	
	In particular, assume for contradiction that there exists an allocation satisfying the envy-matrix where a non-empty subset $Y' \subseteq N_G(Z)$ of houses are allocated. 
	Let $X'$ be the set of agents in $Z$ who only have edges to houses in $N_G(Z) \setminus Y'$. Since $X' \subset Z$ and $Z$ is a minimal Hall violator, it follows that $|X'| \leq N_G(X') \leq N_G(Z) \setminus Y'$. 
	Since $|Z| > |N_G(Z)|$, it follows that $|Z \setminus X'| > |Y'|$.
	From the definition of $X'$, every agent in $Z \setminus X'$ has at least one most preferred house in $Y'$. Since the houses in $N_G(Z) \setminus  Y'$ are unallocated, such an agent must be allocated a house from $Y'$ in any allocation satisfying the envy-matrix. However, there are fewer houses in $Y'$ than agents in $Z \setminus X'$, which is a contradiction.
	
	Hence, every house in $N_G(Z)$ cannot appear in any allocation satisfying the envy-matrix. 
	Since no agents in $Z$ can be allocated houses in $P(Z) \setminus  N_G(Z)$, it follows that these houses cannot be allocated either. 
\end{proof}

\subsection*{Proof of \Cref{prop:ExistsPossiblyEF-compact-indiff}}
\begin{proof}[Proof of \Cref{prop:ExistsPossiblyEF-compact-indiff}]
We do not consider a probabilistic preference input but consider the underlying weak order as deterministic. We then use the algorithm of \citet{GSV19}  to check whether an envy-free allocation exists for the weak order. Such an allocation is also envy-free for the agents where the ties are broken in a way so that agent prioritizes the item that she gets. 
\end{proof}

\section{Pairwise Model}

In the pairwise probability model, each agent reports independent pairwise probabilities over pairs of items. If agent~$i$ prefers item $o$ over item $o'$ with probability $p$, then she prefers $o'$ over $o$ with probability $1-p$. For the pairwise model, both  \textsc{ExistsCertainlyEF} and \textsc{ExistsPossiblyEF} are NP-complete.

\begin{theorem}\label{prop:PW-eef}
For the Pairwise Probability model, \textsc{ExistsCertainlyEF} is NP-complete.
\end{theorem}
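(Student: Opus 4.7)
Membership in NP is immediate: given a candidate allocation $\omega$, we simply check that $p_i(\omega(i), \omega(j)) = 1$ for every pair of distinct agents $i, j$. Because the pairwise preferences are independent by definition of the model, the probability that $\omega$ is envy-free factors as $\prod_{i \in N}\prod_{j \neq i} p_i(\omega(i), \omega(j))$, and this product equals $1$ if and only if every factor equals $1$. This verification runs in $O(n^2)$ time.

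For NP-hardness, the plan is to reduce from \textsc{ExistsCertainlyEF} under the lottery model, which is NP-hard by Theorem~\ref{theorem:Lottery-CertainlyEF}. The key observation is that both problems reduce to the same combinatorial condition on the allocation, namely finding $\omega$ such that $\omega(i)$ is preferred over $\omega(j)$ with \emph{certainty} in $i$'s preferences for every $i \neq j$. In the lottery model ``with certainty'' means ``in every preference list in the support of $\Delta_i$'', and in the pairwise model it means ``$p_i(\omega(i), \omega(j)) = 1$''.

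Given a lottery instance $I$, I would construct a pairwise instance $I'$ on the same sets of agents and houses as follows. For each agent $i$ and each ordered pair $(h, h')$ of distinct houses, set $p_i(h, h') = 1$ if $h \succ_i h'$ in every preference list in the support of $\Delta_i$, set $p_i(h, h') = 0$ if $h' \succ_i h$ in every such list, and otherwise set $p_i(h, h')$ to any value in $(0,1)$ (taking $p_i(h', h) = 1 - p_i(h, h')$ to respect the model's consistency constraint). The construction is clearly polynomial-time. By the observation above, a certainly EF allocation exists in $I'$ if and only if a certainly EF allocation exists in $I$.

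The main (mild) obstacle is ensuring the reduction is well-defined. Specifically, one must confirm that the assignment of $p_i(\cdot,\cdot)$ values is coherent: since the certain preferences extracted from a lottery over linear orders form the intersection of transitive relations, they are themselves transitive, so no contradictory $0/1$ assignments arise. Once this is verified, the equivalence follows directly from the definitions of certain envy-freeness in the two models, completing the reduction.
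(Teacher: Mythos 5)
Your proposal is correct and follows essentially the same route as the paper: both reduce from \textsc{ExistsCertainlyEF} under the lottery model by setting $p_i(h,h')=1$ when $h$ precedes $h'$ in every supported list, $0$ in the symmetric case, and an arbitrary interior value otherwise (the paper uses $0.5$), then observing that certain envy-freeness in both models amounts to the same condition on each ordered pair of allocated houses. Your extra remark about transitivity is harmless but not needed, since the pairwise model only requires $p_i(h,h')+p_i(h',h)=1$, which the case split guarantees directly.
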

\begin{proof}
\textsc{ExistsCertainlyEF} is in NP because it can be checked in polynomial time whether a given allocation is certainly EF or not.

To prove NP-hardness, we reduce from \textsc{ExistsCertainlyEF} in the lottery model, which is NP-Complete by Theorem~\ref{theorem:Lottery-CertainlyEF}. 
Consider an instance $I$ with the lottery model, with $n$ agents and $m$ houses.
We construct an instance $I'$ under the pairwise model, with the same number of agents and houses. 
Consider some agent $i$, and two houses $h$ and $h'$. 
We construct the pairwise preferences as follows:
\begin{itemize}
    \item If house $h$ occurs prior to house $h'$ in every preference list of agent $i$ under the lottery model, then $h \succ h'$ with probability $1$ for agent $i$.
    \item If house $h$ occurs after house $h'$ in every preference list of agent $i$ under the lottery model, then $h \succ h'$ with probability $0$ for agent $i$.
    \item Otherwise, house $h \succ h'$ with probability $0.5$ for agent $i$ ($0.5$ here is arbitrary, it just needs to be greater than 0 and less than 1).
\end{itemize}

We now prove that a ``yes'' result in the pairwise instance corresponds to a ``yes'' result in the lottery instance. In particular, the same allocation of houses to agents is certainly envy-free under the lottery model.

Consider any pair of agents $i$ and $j$.
Assume that agent $i$ is assigned house $h_i$ and agent $j$ is assigned house $h_j$. Then, by assumption, we know that $h_i \succ h_j$ with probability 1 for agent $i$, since the allocation is envy free under the pairwise model.
Therefore, house $h_i$ occurs prior to house $h_j$ in every preference list of agent $i$ in the lottery model.
Hence the house allocation is envy free under the lottery model.

We now prove that a ``yes'' result in the lottery instance corresponds to a ``yes'' result in the pairwise instance.

Consider any pair of agents $i$ and $j$.
Assume that agent $i$ is assigned house $h_i$ and agent $j$ is assigned house $h_j$. Then, by assumption, we know that $h_i$ must occur prior to $h_j$ in every preference list for agent $i$.
Then, in the pairwise instance, $h_i \succ h_j$ with probability $1$ for agent $i$.
Hence the house allocation is envy free under the pairwise model.
\end{proof}

To prove NP-hardness, we can reduce from \textsc{ExistsCertainlyEF} in the lottery model, which is NP-Complete by Theorem~\ref{theorem:Lottery-CertainlyEF}.

\begin{theorem}
For the Pairwise Probability model, \textsc{ExistsPossiblyEF} is NP-complete, even when agents have identical preferences. 
\end{theorem}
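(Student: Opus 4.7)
The plan is a reduction to and from \textsc{Clique}. First I would argue membership in NP: under the pairwise model with identical preferences, each ordered pairwise comparison for each agent is realized independently, so for any allocation $\omega$ the EF probability factors as $\prod_{i \neq j} p(\omega(i), \omega(j))$, where $p$ denotes the shared pairwise distribution. This product is strictly positive if and only if $p(\omega(i), \omega(j)) > 0$ for every ordered pair $(i,j)$ with $i \neq j$, and this condition is checkable in polynomial time given $\omega$.

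For NP-hardness I would reduce from \textsc{Clique}. Given an instance $(G = (V, E), k)$ with $k \leq |V|$, I would build an instance with $n = k$ agents and one house $h_v$ for each $v \in V$, declaring that all agents share a single pairwise distribution $p$ (hence identical preferences): for $\{u,v\} \in E$ set $p(h_u, h_v) = \tfrac{1}{2}$, and for $\{u,v\} \notin E$ pick any orientation and set $p(h_u, h_v) = 1$ (so $p(h_v, h_u) = 0$). The driving observation is that an allocation can only be possibly EF if, for every pair of distinct assigned houses $h_u, h_v$, both $p(h_u, h_v) > 0$ and $p(h_v, h_u) > 0$ hold, i.e.\ $0 < p(h_u, h_v) < 1$; by construction this is equivalent to $\{u,v\} \in E$.

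Correctness would then follow in both directions. Given a $k$-clique $C$ in $G$, assigning the $k$ houses $\{h_v : v \in C\}$ to the $n = k$ agents in any way produces an allocation whose EF probability is a product of positive ($\tfrac{1}{2}$-valued) factors, and so is nonzero. Conversely, from a possibly-EF allocation $\omega$ the image $S = \{\omega(i) : i \in N\}$ consists of $n = k$ distinct houses whose every pair satisfies $0 < p < 1$ in both orientations, which forces $S$ to correspond to a $k$-clique in $G$.

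The main subtlety I anticipate is not a technical obstacle but a modeling one: the precise meaning of ``identical preferences'' in the pairwise model. I would adopt the natural interpretation that all agents share the same pairwise distribution while still sampling their realizations independently across agents; under the alternative coupled interpretation, no nontrivial possibly-EF allocation could ever exist for $n \geq 2$ (since the single shared realization would need both $h \succ h'$ and $h' \succ h$), and the theorem would be vacuous rather than NP-hard. Once this interpretation is fixed, the reduction above is a direct encoding of \textsc{Clique} and no step requires further machinery.
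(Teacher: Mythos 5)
Your proposal is correct and takes essentially the same approach as the paper: the paper reduces from \textsc{Independent Set}, assigning probability $1$ to comparisons between adjacent vertices and $\tfrac12$ otherwise, which is exactly your \textsc{Clique} reduction applied to the complement graph. Your reading of ``identical preferences'' (shared pairwise distribution, independent realizations across agents) matches the paper's, and your observation that possible envy-freeness requires $0 < p(h,h') < 1$ for every pair of assigned houses is the same driving fact used there.
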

\begin{proof}
\textsc{ExistsPossiblyEF} is in NP because it can be checked in polynomial time whether a given allocation is possibly EF or not.

To prove NP-hardness, we reduce from \textsc{Independent Set}.
Consider an instance of independent set with a graph $G = (V, E)$ and a target $k$.
The goal is to determine whether $G$ has an independent set of size $k$.
We assume that $V = \{v_1, v_2, \ldots, v_{|V|}\}$, that is, the vertices are labelled in some arbitrary order. 

We reduce to \textsc{ExistsPossiblyEF} with identical preferences as follows. 
We create an instance of \textsc{ExistsPossiblyEF} with $n = k$ and $m = |V|$ (that is, $k$ agents and $|V|$ houses).
Intuitively, we will have one house for each vertex, and a house will be assigned to an agent if this vertex is in the independent set. 

In particular, consider two vertices $v_i$ and $v_j$ (with $i < j$), which correspond to the two houses $h_i$ and $h_j$. 
We create the pairwise preferences for all agents as follows:
\begin{itemize}
    \item If there is an edge between vertices $v_i$ and $v_j$ in $G$, then $h_i \succ h_j$ with probability $1$ (recall $i < j$, so this is well-defined for all pairs). 
    \item Otherwise, $h_i \succ h_j$ with probability $\frac{1}{2}$.
\end{itemize}

We now prove that a ``yes'' result for independent set corresponds to a ``yes'' result in the pairwise instance.
Consider an independent set $S = \{ v_{s_1}, \ldots, v_{s_k}\}$.
Then, we assign house $h_{s_i}$ to agent $i$ for all $1 \leq i \leq k$.
To prove this is possibly envy free, consider any two agents $i$ and $j$.
Because $v_{s_i}$ and $v_{s_j}$ are both in the independent set, there is no edge between $v_{s_i}$ and $v_{s_j}$ in $G$.
Hence, both agents have $h_{s_i} \succ h_{s_j}$ with probability $0.5$.
Hence, the allocation is possibly envy free. 

We now prove that a ``yes'' result for the \textsc{ExistsPossiblyEF} instance corresponds to a ``yes'' result in independent set.
Assume that agent $i$ is assigned house $h_{s_i}$, and that this allocation is envy free.
Then, let $S = \{ v_{s_1}, \ldots, v_{s_k} \}$.
We will prove that $S$ is an independent set.
Consider two vertices $v_{s_i}$ and $v_{s_j}$ in $G$, and assume without loss of generality that $s_i < s_j$.
Then, agent $i$ was assigned house $h_{s_i}$ and agent $j$ was assigned house $h_{s_j}$.
Assume for contradiction that there is an edge between $v_{s_i}$ and $v_{s_j}$ in $G$.
Then, $h_i \succ h_j$ with probability $1$, and so agent $j$ will envy agent $i$.
Thus, we have a contradiction, and so there is no edge between vertices $v_{s_i}$ and $v_{s_j}$ in $G$. 
Hence, $S$ is an independent set.
\end{proof}

\begin{corollary} For the pairwise model, there is no polynomial-time algorithm with bounded multiplicative approximation ratio for {\sc Max-ProbEF}, assuming P$\neq$NP.
\end{corollary}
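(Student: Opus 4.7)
The plan is to reduce \textsc{ExistsPossiblyEF} under the pairwise probability model, which is NP-complete by the preceding theorem, to the task of distinguishing zero from nonzero optimum in \textsc{Max-ProbEF}. Any polynomial-time algorithm with a bounded multiplicative approximation ratio must in particular return a positive-probability allocation whenever one exists, and this is exactly what \textsc{ExistsPossiblyEF} asks about.

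More concretely, suppose for contradiction that there is a polynomial-time algorithm $\mathcal{A}$ that, on every instance, returns an allocation whose EF-probability is at least $\mathsf{OPT}/\alpha$ for some fixed $\alpha \geq 1$ (or, more generally, for some $\alpha$ that is polynomially bounded in the input size). Given an instance $I$ of \textsc{ExistsPossiblyEF} under the pairwise model, I would run $\mathcal{A}$ on $I$ to obtain an allocation $\omega$, and then test in polynomial time whether $\omega$ has strictly positive EF-probability. For the pairwise model this test is immediate: under independence of pairwise comparisons, $\omega$ has positive EF-probability if and only if for every ordered pair of agents $i\neq j$ the pairwise probability $\Pr[\omega(i)\succ_i \omega(j)]$ is strictly positive, which can be checked by inspecting $O(n^2)$ entries of the input.

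If $\mathsf{OPT}=0$, i.e.\ no possibly-EF allocation exists, then the allocation $\omega$ returned by $\mathcal{A}$ trivially has EF-probability $0$, so the test reports ``no''. If $\mathsf{OPT}>0$, then by the approximation guarantee $\omega$ has EF-probability at least $\mathsf{OPT}/\alpha>0$, so the test reports ``yes''. Hence the combined procedure decides \textsc{ExistsPossiblyEF} in polynomial time, contradicting its NP-hardness under the pairwise model.

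I do not anticipate a substantive obstacle: the argument is essentially the same gap-amplification-free reduction used for the lottery-model corollary, and the only model-specific ingredient is verifying positivity of the EF-probability of a single allocation, which for the pairwise model is transparent from independence of the pairwise comparisons. The proof therefore amounts to invoking the NP-hardness of \textsc{ExistsPossiblyEF} from the preceding theorem together with the fact that a bounded multiplicative approximation preserves the zero/nonzero dichotomy of $\mathsf{OPT}$.
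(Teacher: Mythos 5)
Your proposal is correct and is precisely the argument the paper intends: the corollary is stated without an explicit proof because it follows, exactly as for the lottery-model corollary, from the NP-hardness of \textsc{ExistsPossiblyEF} together with the observation that any multiplicative approximation preserves the zero/nonzero dichotomy of $\mathsf{OPT}$. Your model-specific check that positivity of the EF-probability reduces to positivity of the $O(n^2)$ relevant pairwise entries (by independence) is also sound and matches the paper's polynomial-time EF-probability computation.
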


\end{document}